\documentclass[a4paper,UKenglish]{lipics}
\usepackage[ruled,vlined,commentsnumbered,titlenotnumbered]{algorithm2e}
\usepackage{algorithmic}
\usepackage{amsfonts}
\usepackage{amssymb}
\usepackage{amsmath}
\usepackage{amsthm}
\usepackage{latexsym}
\usepackage{enumerate}
\usepackage{tikz}
\usepackage{xifthen}

\theoremstyle{plain}

\newtheorem{conjecture}[theorem]{\textsf{\textbf{Conjecture}}}

\def \rTISP {\textsf{TISP}}
\def \wTISP {\textsf{word-TISP}}

\def \Z {\mathbb Z}
\def \R {\mathbb R}

\newenvironment{reminder}[1]{\smallskip
	\noindent {\textsf{\textbf{Reminder of #1}}}\em}{}

\title{Deterministic Time-Space Tradeoffs for k-SUM}
\titlerunning{Deterministic Time-Space Tradeoffs for k-SUM}

\author[1]{Andrea Lincoln\thanks{Supported by a Stanford Graduate Fellowship.}}
\author[2]{Virginia Vassilevska Williams\thanks{Supported by NSF Grants CCF-1417238, CCF-1528078 and CCF-1514339, and BSF Grant BSF:2012338.}}
\author[3]{Joshua R. Wang\thanks{Supported by a Stanford Graduate Fellowship.}}
\author[4]{R. Ryan Williams\thanks{Supported in part by NSF CCF-1552651. Any opinions, findings and conclusions or recommendations expressed in this material are those of the authors and do not necessarily reflect the views of the National Science Foundation.}}

\affil[1]{Computer Science Department, Stanford University, USA\\ \texttt{andreali@cs.stanford.edu}}
\affil[2]{Computer Science Department, Stanford University, USA\\ \texttt{virgi@cs.stanford.edu}}
\affil[3]{Computer Science Department, Stanford University, USA\\ \texttt{joshua.wang@cs.stanford.edu}}
\affil[4]{Computer Science Department, Stanford University, USA\\ \texttt{rrw@cs.stanford.edu}}

\authorrunning{A. Lincoln, V. Vassilevska Williams, J. R. Wang and R. R. Williams} 

\Copyright{Andrea Lincoln, Virginia Vassilevska Williams, Joshua R. Wang and R. Ryan Williams}

\subjclass{F.2.1 Numerical Algorithms and Problems}
\keywords{3SUM; kSUM; time-space tradeoff; algorithm.}
\EventEditors{Ioannis Chatzigiannakis, Michael Mitzenmacher, Yuval
	Rabani, and Davide Sangiorgi}
\EventNoEds{4}
\EventLongTitle{43rd International Colloquium on Automata, Languages,
	and Programming (ICALP 2016)}
\EventShortTitle{ICALP 2016}
\EventAcronym{ICALP}
\EventYear{2016}
\EventDate{July 11--15, 2016}
\EventLocation{Rome, Italy}
\EventLogo{eatcs}
\SeriesVolume{55}
\ArticleNo{XXX}
%

\begin{document}
\maketitle

\begin{abstract} Given a set of numbers, the $k$-SUM problem asks for a subset of $k$ numbers that sums to zero. When the numbers are integers,  the time and space complexity of $k$-SUM is generally studied in the word-RAM model; when the numbers are reals, the complexity is studied in the real-RAM model, and space is measured by the number of reals held in memory at any point.

We present a time and space efficient deterministic self-reduction for the $k$-SUM problem which holds for both models, and has many interesting consequences. To illustrate: 
 
\begin{itemize}
	
	\item $3$-SUM is in deterministic time $O(n^2 \lg\lg(n)/\lg(n))$ and space $O\left(\sqrt{\frac{n \lg(n)}{\lg\lg(n)}}\right)$. In general, 	any polylogarithmic-time improvement over quadratic time for $3$-SUM can be converted into an algorithm with an identical time improvement but low space complexity as well.
	
	\item $3$-SUM is in deterministic time $O(n^2)$ and space $O(\sqrt n)$, derandomizing an algorithm of Wang.

	\item A popular conjecture states that 3-SUM requires $n^{2-o(1)}$ time on the word-RAM. We show that the 3-SUM Conjecture is in fact equivalent to the (seemingly weaker) conjecture that every $O(n^{.51})$-space algorithm for $3$-SUM requires at least $n^{2-o(1)}$ time on the word-RAM.

	\item  For $k \ge 4$, $k$-SUM is in deterministic 
	$O(n^{k - 2 + 2/k})$ time and $O(\sqrt{n})$ space.
	
\end{itemize}

\end{abstract}

\newpage

\section{Introduction}

We consider the $k$-SUM problem: given a list $S$ of $n$ values, determine whether there are distinct $a_1,\ldots,a_k\in S$ such that $\sum_{i=1}^k a_i = 0$.
This classic problem is a parameterized version of the Subset Sum problem, which is among Karp's original NP-Complete problems.\footnote{Karp's definition of the Knapsack problem is essentially Subset Sum~\cite{karp1972reducibility}.} 

The brute-force algorithm for $k$-SUM runs in $O(n^k)$ time, and it is known~\cite{Patrascu:Ryan:10} that an $n^{o(k)}$ time algorithm (where the little-o depends on $k$) would violate the Exponential Time Hypothesis~\cite{ipz1}. A faster meet-in-the-middle algorithm reduces the $k$-SUM problem on $n$ numbers to $2$-SUM on $O(n^{\lceil k/2\rceil})$ numbers, which can then be solved by sorting and binary search in $O(n^{\lceil k/2\rceil}\log n)$ time. The belief that this meet-in-the-middle approach is essentially time-optimal is at the heart of many conditional $3$-SUM-hardness results in computational geometry (e.g.~\cite{gajentaan1995class}) and string matching (e.g.~\cite{jumbled,avwlocal}). 

The space usage of the meet-in-the-middle approach is prohibitive: the $O(n\log n)$ time solution for $2$-SUM uses linear space, which causes the fast $k$-SUM algorithm to need $\Omega(n^{\lceil k/2\rceil})$ space. However, the brute-force algorithm needs only $O(k\log n)$ space. This leads to the natural question: \emph{how well can one trade off time and space in solving $k$-SUM?} 

Schroeppel and Shamir~\cite{schroeppel1981t} first studied time-space tradeoff algorithms for Subset Sum. They showed how to reduce Subset Sum to an instance of $k$-SUM for any $k \geq 2$: split the elements into $k$ sets of $n/k$ elements each; for each set, compute $2^{n/k}$ sums corresponding to the subsets of the set; this forms a $k$-SUM instance of size $2^{n/k}$.
Since the $k$-SUM instance does not have to be explicitly stored, any time $T(N)$, space $S(N)$ algorithm for $k-$SUM immediately implies a time $T(2^{n/k})$, space $S(2^{n/k})$ algorithm for Subset Sum.
Furthermore, Schroeppel and Shamir gave a deterministic $\tilde{O}(n^2)$ time, $\tilde{O}(n)$ space algorithm for $4$-SUM, implying a $O^*(2^{n/2})$ time, $O^*(2^{n/4})$ space algorithm for Subset Sum.\footnote{The notation $\tilde{O}$ suppresses polylogarithmic factors in $n$, and $O^*$ suppresses polynomial factors in $n$.} They also generalized the algorithm to provide a smooth time-space tradeoff curve, with extremal points at $O^*(2^{n/2})$ time, $O^*(2^{n/4})$ space and $O^*(2^n)$ time, $O^*(1)$ space.

A recent line of work leading up to Austrin et al.~\cite{Austrin:2013} has improved this long-standing tradeoff curve for Subset Sum via \emph{randomized} algorithms, resulting in a more complex curve.  Wang~\cite{wang2014space} moved these gains to the $k$-SUM setting. In particular, for $3$-SUM he obtains an $\tilde{O}(n^2)$ time, $\tilde{O}(\sqrt n)$ space Las Vegas algorithm.

Despite the recent progress on the problem, all of the improved algorithms for the general case of $k$-SUM have heavily relied on randomization, either utilizing hashes or random prime moduli. These improvements also all rely heavily on the values in the lists being integers. For the general case of $k$-SUM, the previous best {\em deterministic} $k$-SUM results (even for integer inputs) are the brute-force algorithm, the meet-in-the-middle algorithm, and the Schroeppel and Shamir $4$-SUM algorithm, and simple combinations thereof. 

\subsection{Our Results}

We consider new ways of trading time and space in solving $k$-SUM, on both integer and real inputs (on the word-RAM and real-RAM respectively), without the use of randomization. Our improvements for $k$-SUM naturally extend to improvements to Subset Sum as well. 

Our main result is a deterministic self-reduction for $k$-SUM. Informally, we show how to deterministically decompose a list of $n$ numbers into a small collection of shorter lists, such that the $k$-SUM solution is preserved. This result is shown for $k=3$ in Section \ref{sec:3sum}. It is shown for general $k$ in Section \ref{sec:ksum}. 

\begin{theorem}\label{thm:main}
	Let $g$ be any integer between $1$ and $n$.  $k$-SUM on $n$ numbers can be reduced to $O(kg^{k-1})$ instances of $k$-SUM on $n/g$ numbers. The reduction uses $O(n g^{k-1})$ additional time and $O(n/g)$ additional words of space.
\end{theorem}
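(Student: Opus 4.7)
The plan is to partition $S$ deterministically into $g$ equal-size buckets sorted by value, and then enumerate only those $k$-tuples of buckets whose Minkowski-sum interval can contain $0$.

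I would start by applying a deterministic linear-time selection routine (e.g.\ median-of-medians) $g-1$ times to rearrange $S$ in place into consecutive buckets $B_1, \ldots, B_g$ of size $\lceil n/g \rceil$ each, satisfying $\max B_i \le \min B_{i+1}$. This takes $O(ng)$ time, uses $O(1)$ auxiliary pointers, and leaves an $O(n/g)$ scratch buffer free; afterward $L_i := \min B_i$ and $R_i := \max B_i$ are read in $O(1)$ from the rearranged array. Call a tuple $(b_1,\ldots,b_k) \in \{1,\ldots,g\}^k$ \emph{compatible} if $\sum_j L_{b_j} \le 0 \le \sum_j R_{b_j}$. Any $k$-SUM solution with one summand from each of $B_{b_1},\ldots,B_{b_k}$ must come from a compatible tuple, so it suffices to emit one sub-instance per such tuple: the $k$-SUM on the concatenation of the $k$ buckets, which has $O(n/g)$ elements.

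To enumerate sub-instances without iterating over all $g^k$ tuples, I would traverse prefixes $(b_1,\ldots,b_{k-1})$ in an order that changes a single coordinate at a time and maintain, via two pointers, the contiguous range of compatible $b_k$'s. Contiguity holds because the disjoint sorted bucket intervals make the condition ``interval of $b_k$ intersects $[-\sum_j R_{b_j}, -\sum_j L_{b_j}]$'' connected in $b_k$. Each compatible tuple is emitted by copying its $k$ buckets into the reusable scratch buffer in $O(n/g)$ time. Combined with the $O(ng)$ preprocessing and the $O(g^{k-1})$ incremental pointer work, the total additional time is $O(ng) + O(g^{k-1}) + O(n/g) \cdot O(kg^{k-1}) = O(ng^{k-1})$, and additional space remains $O(n/g)$.

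The crux is the counting lemma: there are $O(kg^{k-1})$ compatible tuples, which I would prove by induction on $k$. For $k=2$, a pair $(i,j)$ is compatible iff bucket $j$'s interval intersects $[-R_i,-L_i]$; summed over $i$ this is at most $2g$ (boundary buckets, two per $i$) plus pairs where $[L_j,R_j] \subseteq [-R_i,-L_i]$. Swapping summation and using disjointness of the bucket intervals, for each $j$ there is at most one $i$ whose interval $[L_i,R_i]$ contains the flipped interval $[-R_j,-L_j]$, so the ``interior'' term contributes at most $g$, giving $3g = O(g)$ total. For general $k$, I would fix $b_1$ and treat the residual as a ``generalized $(k-1)$-SUM compatibility'' whose target is the interval $[-R_{b_1},-L_{b_1}]$; the same charge-by-boundary-vs-interior argument should inductively yield $O((k-1)g^{k-2})$ compatible $(b_2,\ldots,b_k)$ per $b_1$, summing to $O(kg^{k-1})$. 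The main obstacle I anticipate is formalizing this induction: as coordinates are fixed one by one the target interval widens (its width accumulates as $\sum_j W_{b_j}$), and one must show the extra ``interior'' buckets admitted by a widening target do not inflate the count. A clean resolution will likely hinge on a global double-counting scheme that charges each compatible tuple to a canonical index pair and exploits the disjointness of the sorted bucket partition at every level of the recursion.
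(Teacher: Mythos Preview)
Your high-level plan---partition into $g$ sorted buckets and enumerate only those $k$-tuples of buckets whose sum-interval straddles $0$---is exactly the paper's strategy, and your enumeration scheme (brute force the first $k-1$ indices, sweep a pointer over the last) matches Algorithm~\ref{alg:3sum-real} and Theorem~\ref{thm:real-self-reduction}. Two points deserve comment.

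\textbf{Partitioning and the space model.} You propose to rearrange $S$ in place via repeated selection. In the paper's model the input is read-only, so in-place rearrangement is unavailable; storing a rearranged copy would cost $\Theta(n)$ space and break the $O(n/g)$ bound. The paper instead re-extracts buckets on demand: its \textsc{NextGroup} routine (Lemma~\ref{lem:nextgroup}) takes a threshold value and returns the next $n/g$ elements above it in $O(n)$ time and $O(n/g)$ space, so buckets are materialized one at a time into the same scratch buffer. This is a small fix, but it is why the paper spends effort on a space-efficient selection primitive.

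\textbf{The counting lemma.} Here the paper takes a genuinely different and much shorter route than your proposed induction. Observe that if two compatible tuples $t,t'\in[g]^k$ satisfy $t[i]<t'[i]$ for every $i$, then the maximum $k$-sum over the buckets indexed by $t$ is at most the minimum $k$-sum over those indexed by $t'$; both cannot straddle $0$. Hence the set of compatible tuples is an \emph{antichain} in $[g]^k$ under the strict coordinatewise order $\prec$. The paper's Domination Lemma (Lemma~\ref{lem:domination}) then bounds any such antichain by $kg^{k-1}$ via Dilworth's theorem: cover $[g]^k$ by the ``diagonal'' chains $\{t,t+\mathbf{1},t+2\cdot\mathbf{1},\ldots\}$ started at every tuple with a $1$ in some coordinate, of which there are at most $kg^{k-1}$. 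This one-shot argument completely avoids the widening-target obstacle you identified in your inductive approach; your induction is not wrong in spirit, but the antichain/Dilworth formulation is the clean resolution you were looking for.
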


Theorem~\ref{thm:main} has several interesting applications. First, it leads to more efficient $k$-SUM algorithms. For example, Gold and Sharir, building on other recent advances, report a deterministic algorithm for $3$-SUM that works in both the word-RAM and real-RAM models and which runs in time $O(n^2 \lg\lg(n)/\lg(n))$ \cite{gold2015improved}. However, this algorithm uses a considerable amount of space to store a table of permutations. Applying Theorem~\ref{thm:main} in multiple ways and calling their algorithm, we recover the \emph{same} asymptotic running time but with drastically better space usage: 

\begin{theorem}
	There is an $O(n^2 \lg\lg(n)/\lg(n))$  time deterministic algorithm for $3$-SUM that stores at $O(\sqrt{\frac{n \lg(n)}{\lg\lg(n)}})$ numbers in memory at point. (An analogous statement holds for $3$-SUM over the integers.)
\end{theorem}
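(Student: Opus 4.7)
The plan is to apply Theorem~\ref{thm:main} once with a carefully chosen $g$, reducing the $n$-element 3-SUM instance to $O(g^2)$ smaller 3-SUM sub-instances, and then invoke the Gold--Sharir deterministic 3-SUM algorithm on each sub-instance one at a time. The parameter $g$ will be tuned so that (i) the additive $O(n/g)$ space from Theorem~\ref{thm:main} matches the target $O(\sqrt{n\lg(n)/\lg\lg(n)})$, (ii) Gold--Sharir's memory footprint on each shrunk sub-instance also fits within that budget, and (iii) the additive $O(ng^2)$ reduction time does not exceed $O(n^2\lg\lg(n)/\lg(n))$.

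Concretely, I will set $g := \lceil\sqrt{n\lg\lg(n)/\lg(n)}\rceil$, so that $m := n/g = \Theta(\sqrt{n\lg(n)/\lg\lg(n)})$. Theorem~\ref{thm:main} then produces $O(g^2) = O(n\lg\lg(n)/\lg(n))$ sub-instances of 3-SUM on $m$ numbers, with reduction overhead $O(ng^2) = O(n^2\lg\lg(n)/\lg(n))$ time and $O(n/g) = O(\sqrt{n\lg(n)/\lg\lg(n)})$ space. The sub-instances are processed sequentially; since each sub-instance's input can be regenerated on demand from the compact state of the reduction, no memory accumulates across iterations. The total solving time is
\[
O(g^2)\cdot O\!\left(\frac{m^2 \lg\lg(m)}{\lg(m)}\right) \; = \; O\!\left(\frac{n^2 \lg\lg(n)}{\lg(n)}\right),
\]
using $gm = n$ together with $\lg(m) = \Theta(\lg(n))$ and $\lg\lg(m) = \Theta(\lg\lg(n))$, both of which hold because $g = n^{1/2 + o(1)}$.

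The main obstacle is the space usage of each Gold--Sharir call. At any moment the memory footprint is the $O(m)$ reduction state plus the working memory of one Gold--Sharir invocation on an $m$-element input, so the critical step is to verify that Gold--Sharir uses $O(m)$ space (up to absorbed polylog factors) on size-$m$ inputs. The ``considerable amount of space'' noted earlier comes from a precomputed table whose size depends only (and sub-polynomially) on the input size, so feeding Gold--Sharir only $m$ elements rather than $n$ shrinks the table automatically. Should a careful accounting reveal that Gold--Sharir's space on an $m$-element input exceeds $m$ by a polynomial factor, I would apply Theorem~\ref{thm:main} a second time within each sub-instance to further shrink the effective input size (this is likely what is meant by ``applying Theorem~\ref{thm:main} in multiple ways''). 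The integer version of the statement follows by the same argument, since both Theorem~\ref{thm:main} and Gold--Sharir also work on the word-RAM.
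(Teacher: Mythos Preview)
Your single-application argument has a real gap: you assume that Gold--Sharir on an $m$-element input uses only $O(m)$ space (``up to absorbed polylog factors''), but you never establish this, and the paper explicitly flags that their algorithm ``uses a considerable amount of space to store a table of permutations.'' If $S_{\mathrm{GS}}(m)$ is, say, $m^{1+\Omega(1)}$ or even $m\cdot\mathrm{polylog}(m)$, the bound $O(n/g)+S_{\mathrm{GS}}(n/g)$ no longer matches the claimed $O\bigl(\sqrt{n\lg n/\lg\lg n}\bigr)$. Your appeal to the table being ``sub-polynomial in the input size'' is not justified; the Fredman-style sorting tables in this line of work are indexed by the input, not by a universal parameter.

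Your hedge at the end is precisely what the paper actually does, and it is the main point rather than a fallback. The paper applies the self-reduction \emph{twice}. In the first pass it makes only the safe assumption $S(n)\le T(n)=n^2/f(n)$ (space never exceeds time) and chooses $g$ so that $T(n/g)=O(n)$; this yields a \emph{linear-space} $O(n^2\lg\lg n/\lg n)$ algorithm with no assumption on Gold--Sharir's space. In the second pass it feeds this new linear-space algorithm back through the self-reduction with parameter $h=\sqrt{n\lg\lg n/\lg n}$, obtaining space $O(n/h)+O(n/h)=O\bigl(\sqrt{n\lg n/\lg\lg n}\bigr)$ while keeping the same running time. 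This two-step structure is packaged as a general ``Space Reduction'' theorem. To complete your proof, replace the unproven $S_{\mathrm{GS}}(m)=O(m)$ assumption with this first pass, and then carry out your second pass exactly as you already wrote it.
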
 

Theorem~\ref{thm:main} also directly leads to a derandomization of Wang's space-efficient algorithm for $3$-SUM:

\begin{theorem}
	For all $s\in [0,1/2]$ there is a deterministic time $O(n^{3-2s})$, algorithm which uses $O(n^s)$ words of space for $3$-SUM.
\end{theorem}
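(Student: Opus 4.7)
The plan is to apply Theorem~\ref{thm:main} directly with $k=3$ and a carefully chosen block size $g$, then solve each resulting subinstance with a small-space $3$-SUM routine.

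First I would set $g = n^{1-s}$, so that each subinstance has size $n/g = n^s$. By Theorem~\ref{thm:main}, this produces $O(g^{2}) = O(n^{2-2s})$ subinstances of $3$-SUM on $n^s$ numbers, at an additional cost of $O(n g^{2}) = O(n^{3-2s})$ time and $O(n/g) = O(n^s)$ words of space. Crucially, the reduction is deterministic and the space bound already matches the target $O(n^s)$.

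Next I would solve each $3$-SUM subinstance of size $m = n^s$ using the textbook deterministic meet-in-the-middle algorithm: sort the $m$ numbers in place in $O(m \log m)$ time and $O(m)$ space, then for each element $a$ scan for a pair summing to $-a$ using the standard two-pointer sweep. This costs $O(m^2) = O(n^{2s})$ time per subinstance and $O(m) = O(n^s)$ space, and can reuse the same $O(n^s)$ workspace across subinstances. Summing over all subinstances, solving them takes $O(n^{2-2s}) \cdot O(n^{2s}) = O(n^2)$ time. Added to the reduction's $O(n^{3-2s})$ overhead, the total time is $O(n^{3-2s} + n^2)$; for $s \in [0, 1/2]$ we have $3 - 2s \geq 2$, so this collapses to $O(n^{3-2s})$, and the total space remains $O(n^s)$.

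There is no serious obstacle here; the only thing to verify is that the endpoint cases behave as expected. At $s = 0$ we recover the brute-force $O(n^3)$ time, $O(1)$ space bound, and at $s = 1/2$ we recover the $O(n^2)$ time, $O(\sqrt n)$ space bound of Wang, now deterministically. The main subtlety is ensuring that the $O(n^s)$ space used by the reduction and the $O(n^s)$ space used by each subinstance solver can share the same memory (which they can, since the subinstances are generated and solved one at a time), so the overall space is $O(n^s)$ rather than $O(n^s) + O(n^s)$ counted separately.
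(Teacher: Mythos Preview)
Your proof is correct and is precisely the approach the paper takes: the statement is presented as a direct consequence of the self-reduction (Theorem~\ref{thm:main}, equivalently Theorem~\ref{thm:3sum-real} for $k=3$), instantiated with $g=n^{1-s}$ and the classical $\rTISP(m^{2},m)$ algorithm for $3$-SUM on each subinstance. The arithmetic you give --- $O(g^{2})$ subproblems, $O(ng^{2})=O(n^{3-2s})$ reduction overhead, $O(n^{2})$ total for the subproblem solves, and $O(n^{s})$ space throughout --- matches the paper's intended derivation exactly.
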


From Theorem~\ref{thm:main} we can also derive a more space-efficient algorithm for $4$-SUM, and lift it to a new algorithm for $k$-SUM:

\begin{theorem}
	For $k \ge 4$, $k$-SUM is solvable in deterministic $O(n^{k - 2 + 2/(k-3)} )$ time and $O( \sqrt{n} )$ space in terms of words.
\end{theorem}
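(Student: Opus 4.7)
The plan is to bootstrap from the classical deterministic Schroeppel--Shamir algorithm for $4$-SUM, first making it space-efficient via Theorem~\ref{thm:main}, and then lifting to general $k$-SUM by brute-force enumerating the $k-4$ extra indices.

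For the base case $k=4$, I would apply Theorem~\ref{thm:main} with $g=\sqrt n$, producing $O(n^{3/2})$ instances of $4$-SUM on $\sqrt n$ numbers using only $O(\sqrt n)$ extra words of space. Each sub-instance is solved by the deterministic $\tilde O(m^2)$-time, $\tilde O(m)$-space Schroeppel--Shamir algorithm with $m=\sqrt n$, contributing $\tilde O(n)$ time and $O(\sqrt n)$ space per call. Summing over the $O(n^{3/2})$ sub-instances and adding the $O(n g^{3}) = O(n^{5/2})$ reduction overhead yields a $4$-SUM algorithm running in $\tilde O(n^{5/2})$ time and $O(\sqrt n)$ space, well within the $O(n^{4})$ budget demanded at $k=4$.

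For $k \ge 5$, I reduce $k$-SUM to $O(n^{k-4})$ $4$-SUM instances by iterating over $(k-4)$-tuples of input indices $(i_1,\ldots,i_{k-4})$ in lexicographic order (maintained in $O(\log n)$ bits), and for each tuple invoking the space-efficient $4$-SUM subroutine above on the remaining $\Theta(n)$ numbers with target $-\sum_j a_{i_j}$. Because the $4$-SUM calls are independent, the same $O(\sqrt n)$ scratch memory can be reused across iterations. This gives time $\tilde O(n^{k-4}\cdot n^{5/2}) = \tilde O(n^{k-3/2})$ and space $O(\sqrt n)$, which meets the $O(n^{k-2+2/(k-3)})$ bound exactly when $k \le 7$. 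For $k \ge 8$, the plan is to replace the $4$-SUM base by a generalized Schroeppel--Shamir enumeration of $(k/2)$-tuple sums---obtained by a recursive heap-of-heaps built on top of the $k=4$ sorted-sum stream---running in $\tilde O(m^{k/2})$ time with $\tilde O(m^{k/4})$ priority-queue space, and to retune the parameter $g$ in Theorem~\ref{thm:main} so that this queue and the $O(n/g)$ reduction scratch together fit in $O(\sqrt n)$.

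The main obstacle is constructing the deterministic recursive heap-of-heaps for $k \ge 8$: one must enumerate the $(k/2)$-sums of a length-$m$ list in sorted order, within $\tilde O(m^{k/4})$ space, without materializing the entire stream. A secondary book-keeping subtlety is the distinctness constraint---the indices $i_1,\ldots,i_{k-4}$ fixed at the outer enumeration must be excluded from the $4$-SUM input to avoid reusing elements---which is handled by passing a small list of forbidden indices into the subroutine and filtering them out on the fly at only logarithmic overhead.
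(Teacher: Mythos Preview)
Your argument for $k \le 7$ is correct and essentially what the paper does: make $4$-SUM space-efficient via the self-reduction with $g=\sqrt n$, then brute-force the extra $k-4$ indices (this is exactly Lemma~\ref{lem:bootstrap} applied on top of the $\sqrt n$-space $4$-SUM algorithm).

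For $k \ge 8$ there is a genuine gap, but it is not the one you flagged. The recursive heap-of-heaps is unnecessary. The $\rTISP(m^{k/2}, m^{k/4})$ base algorithm you want already exists trivially for $k$ a multiple of $4$: simply list all $m^{k/4}$ sums of $(k/4)$-tuples and run the ordinary Schroeppel--Shamir $4$-SUM algorithm on that list of size $N=m^{k/4}$, which takes $N^2=m^{k/2}$ time and $N=m^{k/4}$ space. No sorted-stream enumeration is needed. With this base in hand, the paper applies the self-reduction \emph{at the $k$-SUM level}, not at the $4$-SUM level: take $g = n^{(k-2)/k}$ so that each of the $O(g^{k-1})$ subproblems has size $n/g = n^{2/k}$, whence $S(n/g) = (n/g)^{k/4} = \sqrt n$ and $T(n/g) = (n/g)^{k/2} = n$, giving total time $g^{k-1}\cdot n = n^{k-2+2/k}$. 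Lemma~\ref{lem:bootstrap} then handles $k$ not divisible by $4$, at a loss of at most $2/(k-3)$ in the exponent, which yields the stated bound.

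The reason your outer-brute-force route stalls at $k=8$ is structural: brute-forcing $k-4$ indices and then self-reducing $4$-SUM only exploits the $g^{3}$ saving of Theorem~\ref{thm:main} with $k=4$, whereas self-reducing $k$-SUM directly exploits the full $g^{k-1}$ saving. That difference is exactly what separates your $n^{k-3/2}$ from $n^{k-2+2/k}$.
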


\paragraph*{A more plausible $3$-SUM conjecture.}
A rather popular algorithmic conjecture is the \emph{$3$-SUM Conjecture} that $3$-SUM on $n$ integers requires $n^{2-o(1)}$ time on a word-RAM with $O(\log n)$ bit words. This conjecture has been used to derive conditional lower bounds for a variety of problems~\cite{gajentaan1995class,jumbled,avwlocal,patrascu2010towards,popdyn}, and appears to be central to our understanding of lower bounds in low-polynomial time. To refute the conjecture, one could conceivably construct an algorithm that runs in $O(n^{1.99})$ time, but utilizes $\Omega(n^{1.99})$ space in some clever way. Here we consider a seemingly weaker (and thus more plausible) conjecture:

\begin{conjecture}[\textbf{The Small-Space 3-SUM Conjecture}] 

	On a word-RAM with $O(\log n)$-bit words, there exists an $\epsilon > 0$ such that every algorithm that solves $3$-SUM in $O(n^{1/2+\epsilon})$ space  must take at least $n^{2-o(1)}$ time.
	\label{conj:better3SUM}
\end{conjecture}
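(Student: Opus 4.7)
The plan is to establish the Small-Space 3-SUM Conjecture by proving it to be equivalent in strength to the widely-believed standard 3-SUM Conjecture, which is the strongest evidence one can hope to provide short of an unconditional time-space lower bound of the kind currently out of reach. The argument splits into two halves. The easy half: assuming the standard 3-SUM Conjecture, the Small-Space version follows immediately for every $\epsilon > 0$, since any $O(n^{1/2+\epsilon})$-space algorithm is in particular an algorithm, and so must take $n^{2-o(1)}$ time by hypothesis.

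The harder half is the converse direction. I would argue contrapositively: suppose some general 3-SUM algorithm $A$ runs in $O(n^{2-\delta})$ time for some $\delta > 0$, using arbitrary space. I would apply Theorem~\ref{thm:main} with $g = n^{1/2-\epsilon}$, reducing a single $n$-element instance to $O(g^2) = O(n^{1-2\epsilon})$ subinstances of 3-SUM on $n/g = n^{1/2+\epsilon}$ elements each, with only $O(n/g) = O(n^{1/2+\epsilon})$ auxiliary space for the reduction itself. Streaming through the subinstances one at a time and invoking $A$ on each yields total time $n^{1-2\epsilon} \cdot O((n^{1/2+\epsilon})^{2-\delta}) = O(n^{2 - \delta(1/2 + \epsilon)})$, well below $n^{2-o(1)}$, with peak space only $O(n^{1/2+\epsilon})$. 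This would contradict the Small-Space 3-SUM Conjecture for this same $\epsilon$, completing the equivalence and thereby showing that to refute the Small-Space version one must also refute the standard version.

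The main obstacle is ensuring that $A$ itself can be run on each $(n/g)$-element subinstance within the global $O(n^{1/2+\epsilon})$ space budget, as opposed to $A$ requiring, say, linear space on its own input size; if $A$ is space-hungry then one has to first apply Theorem~\ref{thm:main} more carefully or recurse. One should also check that the $O(ng^{k-1})$ auxiliary-time overhead of Theorem~\ref{thm:main} does not dominate the final running-time estimate, and that the argument transfers smoothly between the word-RAM and real-RAM models used throughout the paper. A last subtlety is quantifier handling: the conjecture asserts existence of an $\epsilon > 0$, so one is free to fix $\epsilon$ in advance and match it to the reduction parameter, which the analysis above does.
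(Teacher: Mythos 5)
You cannot literally prove Conjecture~\ref{conj:better3SUM} (it is a conjecture); what the paper actually establishes, and what you are implicitly aiming at, is the \emph{equivalence} with the standard $3$-SUM Conjecture (Theorem~\ref{thm:intro3SUMconjsmallspace} and the equivalence theorem). Your easy direction is fine, and your hard direction follows the paper's general strategy (self-reduction applied to a hypothetical $O(n^{2-\delta})$-time algorithm $A$). However, the step you flag as ``the main obstacle'' is not a side issue to be patched later; it is exactly the content of the paper's proof, and as written your argument has a genuine gap there. After one application of the self-reduction with $g=n^{1/2-\epsilon}$, the total space is $O(n/g + S(n/g))$, where $S$ is the space of the black-box algorithm $A$. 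Since $A$ is only assumed to run in time $O(m^{2-\delta})$ on inputs of size $m$, all one can say is $S(m)\le O(m^{2-\delta})$, so $S(n/g)$ can be as large as roughly $n^{(1/2+\epsilon)(2-\delta)}$, which exceeds your claimed $O(n^{1/2+\epsilon})$ budget whenever $\delta<1$ (indeed it can even exceed linear space). So the assertion ``peak space only $O(n^{1/2+\epsilon})$'' does not follow from a single invocation of Theorem~\ref{thm:main}, and the contradiction with the Small-Space conjecture is not yet obtained.

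The paper closes this gap by applying the self-reduction (Theorem~\ref{thm:3sum-real}) \emph{twice}, in a bootstrapping fashion (Theorem~\ref{thm:spacereduction} and Lemma~\ref{lem:3sumfasterepsilons}). First, bound $S(n)\le T(n)=n^{2-\epsilon}$ and choose $g$ so that $T(n/g)=O(n)$, namely $g=n^{(1-\epsilon)/(2-\epsilon)}$; this converts $A$ into a truly subquadratic algorithm whose space is $O(n/g + T(n/g))=O(n)$, i.e., a \emph{linear-space} subquadratic algorithm. Only then is the self-reduction applied a second time to this linear-space algorithm, with the group parameter chosen so that the subproblem size is about $n^{1/2+\alpha}$; now the space term $n/g + S(n/g)$ really is $O(n^{1/2+\alpha})$ because $S(m)=O(m)$ for the intermediate algorithm, and the time remains $O(n^{2-\delta'})$ for some $\delta'>0$ depending on $\epsilon$ and $\alpha$. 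Your proposal should be revised to include this two-stage argument (or some equivalent device for controlling $S(n/g)$); without it, the hard direction of the equivalence is not established.
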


This conjecture looks weaker than the original $3$-SUM Conjecture, because we only have to prove a quadratic-time lower bound for all algorithms that use slightly more than $\sqrt{n}$ space. Proving time lower bounds is generally much easier when space is severely restricted (e.g. \cite{BeameSSV03,FortnowLMV05,DiehlMW11,Williams08,BeameCM13}).

Our self-reduction for $3$-SUM yields the intriguing consequence that the original $3$-SUM Conjecture is \emph{equivalent} to the Small-Space $3$-SUM conjecture! That is, the non-existence of a truly subquadratic-time $3$-SUM algorithm is equivalent to the non-existence of a truly subquadratic-time $n^{0.51}$-space $3$-SUM algorithm, even though the latter appears to be a more plausible lower bound. We prove:
\begin{theorem}
	If $3$-SUM is solvable in time $O(n^{2-\epsilon})$ time, then for every $\alpha > 0$ there is a $\delta > 0$ such that $3$-SUM is solvable in $O(n^{2-\delta})$ time and space $O(n^{1/2 + \alpha})$ in terms of words.
	\label{thm:intro3SUMconjsmallspace}
\end{theorem}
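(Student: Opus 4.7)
The plan is to apply Theorem~\ref{thm:main} twice in succession (with $k=3$), using the assumed $O(n^{2-\epsilon})$-time algorithm only at the leaves. The first reduction drops the working set below the $O(n^{1/2+\alpha})$ space budget; the second further shrinks the sub-problem size so that the base algorithm itself fits, since we only know a time bound for it and on the word-RAM its space could be as large as its running time.

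I would start by invoking Theorem~\ref{thm:main} with $g := n^{1/2-\alpha}$, obtaining $g^2 = n^{1-2\alpha}$ 3-SUM sub-instances of size $m := n/g = n^{1/2+\alpha}$ at a cost of $O(n g^2) = O(n^{2-2\alpha})$ extra time and $O(n/g) = O(n^{1/2+\alpha})$ extra space. I would then fix $m^{\ast} := n^{(1/2+\alpha)/(2-\epsilon)}$, chosen so that $(m^{\ast})^{2-\epsilon} = n^{1/2+\alpha}$, and inside each size-$m$ sub-instance apply Theorem~\ref{thm:main} a second time with parameter $g' := m/m^{\ast}$, producing $(g')^2$ sub-sub-instances of size exactly $m^{\ast}$; each is then solved by the assumed algorithm as a black box. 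Since sub-instances are processed serially, space is additive across the nested layers: the two reductions contribute $O(n^{1/2+\alpha}) + O(m^{\ast})$ extra space and the base call uses at most $(m^{\ast})^{2-\epsilon} = n^{1/2+\alpha}$, for $O(n^{1/2+\alpha})$ total. For time, the three dominant terms are (i) the level-$1$ extra cost $O(n^{2-2\alpha})$, (ii) the aggregate level-$2$ extra cost $g^2 \cdot m (g')^2 = n^2 \cdot m/(m^{\ast})^2$, and (iii) the aggregate base cost $g^2 (g')^2 (m^{\ast})^{2-\epsilon} = n^2/(m^{\ast})^{\epsilon}$. A short calculation using the identities $gm = n$ and $(m^{\ast})^{2-\epsilon} = n^{1/2+\alpha}$ shows each of these exponents is strictly below $2$ for all $\alpha, \epsilon > 0$, so taking $\delta$ to be the minimum of the three gaps delivers the claimed $O(n^{2-\delta})$ bound.

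The main obstacle is balancing the parameters of the two layers. A single application of Theorem~\ref{thm:main} does not suffice: the requirement $n/g \leq n^{1/2+\alpha}$ already pins $g \geq n^{1/2-\alpha}$, and on the resulting size-$n^{1/2+\alpha}$ sub-instances the assumed algorithm could still consume up to $n^{(1/2+\alpha)(2-\epsilon)}$ space, blowing the budget. The second reduction exists precisely to shrink sub-instances down to the size $m^{\ast}$ at which the worst-case space of the assumed algorithm just fits, and the only real work in the proof is checking that this second layer does not absorb the time savings; that check reduces to verifying that the gap between $2$ and each of the three exponents above is a positive function of $\alpha$ and $\epsilon$, which it is.
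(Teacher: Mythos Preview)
Your proof is correct and uses the same core mechanism as the paper: two nested applications of the self-reduction (Theorem~\ref{thm:main} with $k=3$), one to bring the reduction's own working set below $O(n^{1/2+\alpha})$ and one to shrink sub-instances until the black-box algorithm's (potentially $\Theta(T)$-sized) space fits. The paper, however, composes the two layers in the opposite order: it first applies the reduction with $g=n^{(1-\epsilon)/(2-\epsilon)}$ to obtain a \emph{linear-space} subquadratic algorithm (so that $T(n/g)=O(n)$ and hence $S(n/g)\le T(n/g)=O(n)$), and only then applies a second reduction with $g\approx n^{1/2}$ to drop the space to $O(n^{1/2+\alpha})$. You instead reduce the outer buffer to the target first and fit the black box in the inner layer. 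Both orderings work and yield the same qualitative conclusion; the paper's ordering has the minor advantage that the intermediate object (a linear-space $n^{2-\epsilon'}$-time algorithm) is independent of $\alpha$, so the second step is a clean one-parameter calculation, whereas your inner parameter $m^{\ast}$ depends on both $\alpha$ and $\epsilon$. Your explicit identification of the three time terms and the verification that each exponent stays below $2$ is exactly what is needed in either variant.
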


Theorem \ref{thm:intro3SUMconjsmallspace} is interesting, regardless of the veracity of the $3$-SUM conjecture. On the one hand, the theorem reduces the difficulty of proving the $3$-SUM Conjecture if it is true, because we only have to rule out small-space sub-quadratic time algorithms. On the other hand, the theorem means that refuting the $3$-SUM conjecture immediately implies a truly-subquadratic time algorithm for $3$-SUM using small space as well, which would be an algorithmic improvement.

\section{Preliminaries}

\subsection{$k$-SUM and Selection}

We will use the following version of the $k$-SUM problem:

\begin{definition}
	In the \emph{$k$-SUM problem}, we are given an unsorted list $L$ of $n$ values (over $\Z$ or $\R$) and want to determine if there are $a_1, \ldots, a_k \in L$ such that $\sum_{i=1}^k a_i = 0$.
\end{definition}

One fundamental case is the $3$-SUM problem. Sometimes $3$-SUM is presented with three separate lists, which we denote as $3$-SUM', but the two are reducible to each other in linear time, and with no impact on space usage.

\begin{definition}
	In the \emph{$3$-SUM problem}, we are given an unsorted list $L$ of $n$ values and want to know if there are $a,b,c\in L$ such that $a+b+c=0$. 	In the \emph{$3$-SUM' problem}, we are given three unsorted lists $A$, $B$, and $C$ of values, where $|A|=|B|=|C|=n$, and want to know if there are $a \in A, b \in B, c \in C$ such that $a+b+c=0$.  
\end{definition}

As part of our $k$-SUM algorithms, the classical \emph{Selection Problem} will also arise:

\newcommand{\kselect}[1][s]{$#1$-\textsc{Select}}

\begin{definition}
	In the \emph{\kselect{} problem}, we are given an unsorted list $L$ of $n$ values and a natural number $s$, and want to determine the $s^{th}$ smallest value in $L$.
\end{definition}

\subsection{Computational Model}

As standard when discussing sub-linear space algorithms, the input is provided in read-only memory, and the algorithm works with auxiliary read/write memory which counts towards its space usage. 

\emph{Computation on Integers.} When the input values are integers, we work in the word-RAM model of computation: the machine has a word size $w$, and we assume all input numbers can be represented with $w$ bits so that they fit in a word. Arithmetic operations ($+,-,*$) and comparisons on two words are assumed to take $O(1)$ time. Space is counted in terms of the number of words used.

\emph{Computation on Reals.} When the input values are real numbers, we work in a natural \emph{real-RAM} model of computation, which is often called the \emph{comparison-addition model} (see, for example, \cite{PettieR05}). Here, the machine has access to registers that can store arbitrary real numbers; addition of two numbers and comparisons on real numbers take $O(1)$ time. Space is measured in terms of the number of reals stored.

\emph{Time-Space Complexity Notation.} We say that $k$-SUM is solvable in
$\rTISP(T(n),S(n))$ if $k$-SUM on lists of length $n$ can be solved by a single algorithm running in deterministic $O(T(n))$ time and $O(S(n))$ space simultaneously on the real-RAM (and if the lists contain integers, on the word-RAM).

\subsection{Other Prior Work}
Baran, Demaine and Patrascu~\cite{baran2005subquadratic} obtained randomized slightly subquadratic time algorithms for Integer $3$-SUM in the word-RAM.  Gr{\o}nlund and Pettie~\cite{gronlund2014threesomes} studied $3$-SUM over the reals, presenting an $O(n^2/(\log n/\log\log n))$ time randomized algorithm, as well as a deterministic algorithm running in $O(n^2/(\log n/\log\log n)^{2/3})$ time. Recently, Gold and Sharir~\cite{gold2015improved} improved this deterministic running time to $O(n^2/(\log n/\log\log n))$. Abboud, Lewi and Williams~\cite{losingweight} showed that Integer $k$-SUM is W[1]-complete under randomized FPT reductions (and under some plausible derandomization hypotheses, the reductions can be made deterministic). In the linear decision tree model of computation, $k$-SUM over the reals is known to require $\Omega(n^{\lceil k/2\rceil})$ depth $k$-linear decision trees~\cite{Erickson95,AilonC05},
but the problem can be solved with $O(n^{k/2}\sqrt{\log n})$ depth $(2k-2)$-linear decision trees~\cite{gronlund2014threesomes}. The randomized decision tree complexity was improved by Gold and Sharir~\cite{gold2015improved} to $O(n^{k/2})$.

\section{Building Blocks}

In this section, we describe two tools we use to obtain our main self-reduction lemma for $k$-SUM and $3$-SUM. The first tool helps us guarantee that we don't have to generate too many subproblems in our reduction; the second will allow us to find these subproblems in a time and space efficient way.

\subsection{Domination Lemma}

Our deterministic self-reduction for $k$-SUM will split lists of size $n$ into $g$ sublists of size $n/g$, then solve subproblems made up of $k$-tuples of these sublists. Naively, this would generate $g^k$ subproblems to enumerate all $k$-tuples. In this section, we show that we only need to consider $O(k g^{k-1})$ subproblems. 

First, we define a partial ordering on $k$-tuples on $[n]^k$. For $t,t' \in [n]^k$, we say that $t \prec t'$ if $t[i] < t'[i]$ for all $i=1,\ldots,k$.	 (Geometrically, the terminology is that $t'$ \emph{dominates} $t$.) 

\begin{lemma}[Domination Lemma]
	Suppose all tuples in a subset $S \subseteq [n]^k$ are incomparable with respect to $\prec$. 	Then $|S| \le k n^{k-1}$.	\label{lem:domination}
\end{lemma}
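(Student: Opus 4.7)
My plan is to construct an explicit injection from $S$ into a small ``floor'' subset of $[n]^k$ and then bound the size of that floor directly. Let $T = \{x \in [n]^k : \min_j x[j] = 1\}$. Its size is $n^k - (n-1)^k$, which factors as $\sum_{i=0}^{k-1} n^{k-1-i}(n-1)^i$ and is therefore at most $k n^{k-1}$; so it suffices to inject $S$ into $T$.

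The injection exploits the fact that every ``diagonal'' sequence $\cdots \prec t \prec t + \mathbf{1} \prec t + 2\mathbf{1} \prec \cdots$, where $\mathbf{1} = (1, \ldots, 1)$, is a chain under $\prec$, so an antichain can contain at most one point of any such diagonal. Concretely, for each $t \in S$ I would let $m(t) = \min_j t[j]$ and define
\[
\phi(t) = t - (m(t) - 1)\mathbf{1}.
\]
Shifting by $m(t) - 1$ sends the minimum coordinate of $t$ to $1$ while leaving every coordinate in $[1, n]$, so $\phi(t) \in T$ by construction.

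The heart of the proof will be verifying injectivity of $\phi$ on $S$. If $\phi(t) = \phi(t')$ for $t, t' \in S$, then $t$ and $t'$ differ by an integer multiple of $\mathbf{1}$, say $t' = t + c\mathbf{1}$; the sign of $c$ would immediately force $t \prec t'$ or $t' \prec t$, contradicting incomparability unless $c = 0$. The only creative step is identifying the right floor $T$ and the right ``collapse to the diagonal'' shift; once they are in place, injectivity is a one-liner and the cardinality bound $|T| \le k n^{k-1}$ is elementary.
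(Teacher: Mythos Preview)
Your proof is correct and is essentially the same as the paper's: both decompose $[n]^k$ into the diagonal chains $\{t, t+\mathbf{1}, t+2\mathbf{1}, \ldots\}$ indexed by their ``floor'' points in $T=\{x:\min_j x[j]=1\}$, and use that an antichain meets each such chain at most once. The paper phrases this as a chain cover and cites Dilworth's theorem, whereas you phrase the same observation as an explicit injection $\phi$ onto the chain representatives; your cardinality bound $|T|=n^k-(n-1)^k\le kn^{k-1}$ is in fact slightly sharper than the paper's union-bound estimate.
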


The Domination Lemma can be seen as an extension of a result in \cite{vassilevska2009finding} (also used in~\cite{czumajlingas} in a different context) which covers the $k = 3$ case.

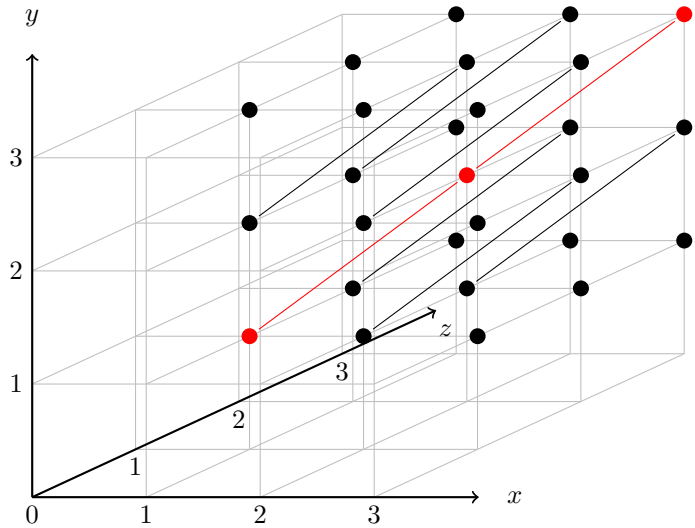
\begin{figure}
	\centering
	\begin{tikzpicture}[scale=1.5]
	\pgfmathsetmacro{\COS}{cos(25)}
	\pgfmathsetmacro{\SIN}{sin(25)}
	
	\foreach \x in {0, ..., 3} {
		\foreach \y in {0, ..., 3} {
			\foreach \z in {0, ..., 3} {
				\pgfmathsetmacro{\X}{\x+\z*\COS}
				\pgfmathsetmacro{\Y}{\y+\z*\SIN}
				\node (c\x\y\z) at (\X, \Y) {};
				\ifthenelse{\x<3}{
					\draw[black!25] (\X, \Y) -- (\X+1, \Y);
				}{}
				\ifthenelse{\y<3}{
					\draw[black!25] (\X, \Y) -- (\X, \Y+1);
				}{}
				\ifthenelse{\z<3}{
					\draw[black!25] (\X, \Y) -- (\COS+\X, \SIN+\Y);
				}{}
			}
		}
	}
	
	\node[label=right:$x$] (x-axis) at (4, 0) {};
	\node[label=above:$y$] (y-axis) at (0, 4) {};
	\node[label=below:$z$] (z-axis) at (4*\COS, 4*\SIN) {};
	\foreach \x in {0, ..., 3} {
		\node[below] at (\x, 0) {$\x$};
	}
	\foreach \y in {1, ..., 3} {
		\node[left] at (0, \y) {$\y$};
	}
	\foreach \z in {1, ..., 3} {
		\node[below] at (\z*\COS, \z*\SIN) {$\z$};
	}
	\draw[thick, ->] (0, 0) -- (x-axis);
	\draw[thick, ->] (0, 0) -- (y-axis);
	\draw[thick, ->] (0, 0) -- (z-axis);
	
	\fill[red] (c111) circle[radius=2pt];
	\fill[red] (c222) circle[radius=2pt];
	\fill[red] (c333) circle[radius=2pt];
	\draw[red] (c111) -- (c222) -- (c333);
	
	\foreach \x in {1, ..., 3} {
		\foreach \y in {1, ..., 3} {
			\foreach \z in {1, ..., 3} {
				\ifthenelse{\not\(\x = \y\) \OR \not\(\y = \z\)} {
					\fill (c\x\y\z) circle[radius=2pt];
					\ifthenelse{\x < 3 \AND \y < 3 \AND \z < 3} {
						\pgfmathtruncatemacro{\X}{\x+1}
						\pgfmathtruncatemacro{\Y}{\y+1}
						\pgfmathtruncatemacro{\Z}{\z+1}
						\draw (c\x\y\z) -- (c\X\Y\Z);
					}{}
				}{}
			}
		}
	}
	
	\end{tikzpicture}
	\caption{Domination Lemma chains when $n = 3$ and $k = 3$. The chain
		\{(1, 1, 1), (2, 2, 2), (3, 3, 3)\} is highlighted in red. In three
		dimensions, the number of chains is roughly proportional to the surface area
		of the cube, which is only $O(n^2)$, despite the fact that there are $O(n^3)$
		points.}
\end{figure}

\begin{proof}	We will give a cover of all elements in $[n]^k$ with few chains under $\prec$. Then by Dilworth's theorem, any set of incomparable elements under $\prec$ can only have one element from each chain.
	
	Take any $k$-tuple $t \in [n]^k$ such that $t[i] = 1$ for some $i=1,\ldots,k$. Letting $\ell \in [n]$ be the largest element in $t$, we define the chain $C(t)=\{t_0, t_1,\ldots,t_{n-\ell}\}$, where each $t_j$ is given by $t_j[i]=t[i]+j$ for all $i=1,\ldots,k$. Clearly $C(t)$ forms a chain in $[n]^k$ under $\prec$. Moreover these chains cover all elements of $[n]^k$: observe that the tuple $t$ appears in the chain $C(t')$ where $t'[i] = t[i] - \min_j t[j] + 1$ for all $i=1,\ldots,k$. 
	
	The number of chains is exactly the number of $k$-tuples with a $1$ in at least one coordinate. This number is less than $k$ times the number of tuples that have a $1$ in dimension $i$. The number of tuples with a $1$ in dimension $i$ is $n^{k-1}$. Thus, the total number of chains is $\leq kn^{k-1}$.	
\end{proof}

\begin{figure}[h!]
	\caption{A depiction of how $L$ is divided.}
	\centering
	
	\includegraphics[scale=0.35]{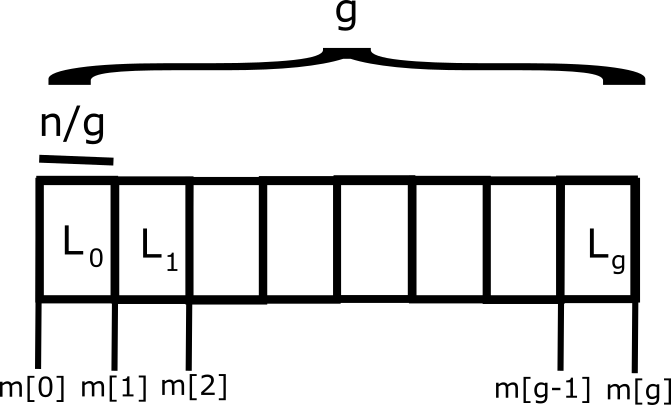}
	
\end{figure}

The Domination Lemma can be applied to show that in any list of numbers, not too many $k$-SUM subproblems can have $k$-SUM solutions. In the following, let $g$ divide $n$ for simplicity. Given a list $L$ of $n$ numbers divided into $g$ groups of size $n/g$, a \emph{subproblem of $L$} is simply the union of a $k$-tuple of groups from $L$. Note that a subproblem contains at most $kn/g$ numbers. 

\begin{corollary} Given a $k$-SUM instance $L$, suppose $L$ is divided into $g$ groups $L_1, \ldots, L_g$ where $|L_i| = n/g$ for all $i$, and for all $a \in L_i$ and $b \in L_{i+1}$ we have $a \leq b$. Then there are $O(k \cdot g^{k-1})$ subproblems $L'$ of $L$ such that the smallest $k$-sum of $L'$ is less than zero and the largest $k$-sum of $L'$ is greater than zero.	Furthermore, if some subproblem of $L$ has its largest or smallest $k$-sum equal to $0$, then the corresponding $k$-SUM solution can be found in $O(g^k)$ time.
	\label{cor:ksum-decomp}
\end{corollary}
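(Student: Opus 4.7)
The plan is to identify each subproblem with a $k$-tuple $t = (i_1,\ldots,i_k) \in [g]^k$ of group indices, interpreting the subproblem as the $k$-SUM' instance on lists $L_{i_1},\ldots,L_{i_k}$ where one element is taken from each component. Define $\mathrm{small}(t) = \sum_{j=1}^k \min(L_{i_j})$ and $\mathrm{large}(t) = \sum_{j=1}^k \max(L_{i_j})$; these are exactly the smallest and largest $k$-sums that can be formed in the subproblem. A subproblem is one of the objects counted by the corollary precisely when $\mathrm{small}(t) < 0 < \mathrm{large}(t)$, which I will call \emph{interesting}. The goal is to show that there are at most $O(k g^{k-1})$ interesting tuples.

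The key step uses the between-group sorting hypothesis: for any $i < i'$, every element of $L_i$ is at most every element of $L_{i'}$, so $\max(L_i) \le \min(L_{i'})$. Hence, whenever $t \prec t'$ in the strict coordinate-wise order on $[g]^k$, we get $\max(L_{i_j}) \le \min(L_{i'_j})$ for each $j$, and summing over $j$ yields $\mathrm{large}(t) \le \mathrm{small}(t')$. If both $t$ and $t'$ were interesting with $t \prec t'$, this would give $0 < \mathrm{large}(t) \le \mathrm{small}(t') < 0$, a contradiction. Therefore the set of interesting tuples is an antichain in $([g]^k, \prec)$, and Lemma~\ref{lem:domination} (applied with $n := g$) immediately bounds its size by $k g^{k-1}$.

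For the second part, I would first compute $\min(L_i)$ and $\max(L_i)$ for every $i \in [g]$ in a single $O(n)$-time pass over the input using $O(g)$ words of workspace. Then I enumerate all $g^k$ tuples $t \in [g]^k$, evaluating $\mathrm{small}(t)$ and $\mathrm{large}(t)$ in $O(k)$ time each from the precomputed extrema; whenever one of these quantities is zero, the corresponding tuple of per-group minima (or maxima) is a valid $k$-SUM solution which we return. The total cost is $O(g^k)$ when $k$ is treated as a constant, as required.

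The only real conceptual content of the argument is recognizing that the interesting subproblems form an antichain under $\prec$; everything else is a direct appeal to Lemma~\ref{lem:domination} and a routine enumeration. It is worth noting that $\prec$ here is the strict coordinate-wise order on all tuples (not only sorted ones), which is exactly the hypothesis of the Domination Lemma, so no additional bookkeeping is needed even if one prefers to parameterize subproblems by sorted tuples or multisets.
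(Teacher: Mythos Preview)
Your proof is correct and follows essentially the same approach as the paper: identify subproblems with tuples in $[g]^k$, use the sorted-group hypothesis to show that $t \prec t'$ forces $\mathrm{large}(t) \le \mathrm{small}(t')$, conclude that the interesting tuples form an antichain, and invoke Lemma~\ref{lem:domination}. The only cosmetic difference is that the paper phrases the extrema via the boundary elements $m[i]$ (the element in position $i\cdot(n/g)$ of sorted $L$) rather than $\min(L_i)$ and $\max(L_i)$, but under the stated hypothesis these coincide and the argument is identical.
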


\begin{proof}
	We associate each subproblem of $L$ with a corresponding $k$-tuple $(x_1,\ldots,x_k)\in [g]^k$ corresponding to the $k$ sublists $(L_{x_1}, \ldots, L_{x_k})$ of $L$.  
	
	Let $m[i]$ be the element in position $i\cdot (n/g)$ when $L$ is in sorted order. Consider any subproblem with $\sum_{i=1}^{k} m[x_i] > 0$ (smallest $k$-sum greater than zero) or $\sum_{i=1}^{k} m[x_{i}+1] < 0$ (largest $k$-sum less than zero). We call such a subproblem \emph{trivial}, since it cannot contain $k$-SUM solutions.

	In $O(g^k)$ time, we can determine whether any subproblem has $\sum_{i=1}^k m[x_i]=0$, and return the corresponding $k$-SUM solution if this is the case. Otherwise, we can assume that for each subproblem either it is trivial, or $\sum_{i=1}^k m[x_i] < 0 < \sum_{i=1}^k m[x_i+1]$. 
	
	Consider the set of non-trivial subproblems. Because for all $a \in L_i$ and $b \in L_{i+1}$ we have $a \leq b$, if for two subproblem $k$-tuples we have $t \prec t'$, then the smallest $k$-sum of the subproblem $t'$ is at least the largest $k$-sum of the subproblem $t$. This implies that at least one of the two subproblems must be trivial. In other words, the set of nontrivial problems corresponds to a set of incomparable $k$-tuples in $[g]^k$. Applying Lemma~\ref{lem:domination}, the number of nontrivial subproblems is $O(k g^k)$.
\end{proof}

\subsection{Bucket Retrieval and Space-Efficient Selection}
\label{subsec:nextgroup}

\newcommand{\nextgroup}[0]{\textsc{NextGroup}}

A randomized algorithm for $k$-SUM can partition a list of numbers by choosing a hash function at random, then loop over the hash function range to partition a given list into smaller buckets. Given a hash and a bucket number, it is easy to retrieve the contents of that bucket by scanning the list. 

To derandomize this process, we could try to create small ``hash'' buckets by grouping the $n/g$ smallest elements together, then the next $n/g$ smallest elements, and so on, \emph{without actually sorting the list}. However, retrieving the contents of a bucket may now be difficult to do with small space: we need to know the smallest and largest elements of a bucket to retrieve its elements, and we may not be able to store all of these extrema. We require an efficient algorithm to compute the largest element of a bucket, given the smallest element and the bucket size.

This problem is equivalent to the \emph{selection problem}, also known as \kselect{}, which asks for the $s^{th}$ smallest element of a list, when we set $s = n/g$. To reduce from our problem to \kselect{}, pretend that every entry less than our smallest element
is $\infty$. (To reduce from \kselect{} to our problem, we can pretend our smallest element is $-\infty$.) 

The classic median-of-median algorithm can solve \kselect{} in $O(n)$ time and $O(n)$ space~\cite{blum1973time}. Since we care about space usage, we provide an algorithm below which has $O(n)$ running time, but uses much less space. This algorithm turns out to be optimal for our purposes, since retrieving the bucket afterwards will already take $O(n)$ time and $O(s)$ space. 
\begin{lemma}
	\label{lem:nextgroup}
	\kselect{} can be solved in $O(n)$ time and $O(s)$ space.
\end{lemma}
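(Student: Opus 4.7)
The plan is to use a streaming-with-batched-selection scheme. Maintain an auxiliary buffer $B$ of capacity $2s$ and a running threshold $\tau$, initialized to $+\infty$. I will maintain the invariant that, after processing the first $j$ elements of the read-only input, every one of the $s$ smallest among them is stored in $B$, and $\tau$ is an upper bound on the $s$-th smallest among them. Scan the input once; for each element $x$ encountered at position $i$, append $(x,i)$ to $B$ whenever $x \le \tau$. Whenever $|B|$ reaches $2s$, run the classical deterministic median-of-medians algorithm on $B$ (which takes $O(s)$ time on a buffer of size $2s$) to compute the $s$-th smallest key, set $\tau$ to this value, and compact $B$ in place so it retains only the $s$ smallest pairs (breaking ties by the position index $i$ to keep $|B|\le s$ after compaction).

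After the scan finishes, the invariant guarantees that the $s$ smallest input elements all reside in $B$, so a single final deterministic linear-time selection on $B$ returns the answer. I would argue correctness by induction on the number of flushes, using the tie-break on positions to show that compaction never discards any element that belongs among the true $s$ smallest so far.

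For the resource bounds: each flush processes $2s$ elements using $O(s)$ time, and since each flush reduces the buffer from $2s$ to at most $s$ elements, at least $s$ fresh input elements are required between two consecutive flushes. Hence the number of flushes is at most $\lceil n/s\rceil$, so the total work over all flushes is $O(n)$. The linear scan contributes $O(n)$, and the final selection contributes $O(s) = O(n)$. The space consists of the buffer of size $2s$, the working space of median-of-medians on a buffer of size $2s$ (which is $O(s)$), and a constant number of registers for $\tau$ and loop counters; all of this is $O(s)$ words (or reals in the real-RAM model, since median-of-medians uses only comparisons).

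The main obstacle I anticipate is making sure compaction is robust and in-place: I need to argue that, after setting $\tau$ to the $s$-th order statistic of the current buffer, discarding all entries with key strictly greater than $\tau$ (and, among entries with key exactly $\tau$, keeping only those with the smallest position indices up to a total of $s$ elements) both preserves the invariant and leaves $|B|\le s$. A small secondary subtlety is the real-RAM case, where the selection procedure must rely only on comparisons and additions; fortunately the median-of-medians algorithm satisfies this, so the same construction works in both computational models claimed in the lemma.
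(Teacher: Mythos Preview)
Your proposal is correct and follows essentially the same approach as the paper: maintain a buffer of size $2s$, flush it down to size $s$ using median-of-medians whenever it fills, and amortize the $O(s)$ flush cost over the $s$ insertions that triggered it. Your threshold $\tau$ and position-based tie-breaking are minor embellishments (the paper instead retains copies of the pivot until exactly $s$ elements remain), but the algorithmic core and the analysis are the same.
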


\begin{proof}
	The plan is to scan through the elements of the list, inserting them to a data	structure $D$ which will allow us to track the smallest $s$ elements. We perform $n$ insertions, then query $D$ to ask for the smallest $s$ elements it contains. To get the claimed algorithm for selection, we give a data	structure can handle these operations in $O(1)$ amortized update time and $O(s)$	query time, with a data structure using only $O(s)$ space.
	
	One first attempt might be to build a heap of $s+1$ elements, which throws away the largest element whenever it gets full. Since heaps have logarithmic update time and linear space usage, this results in $O(\log s)$	update time, $O(s)$ query time, and $O(s)$ space.
	
	We can improve the update time by batching when we throw out large 	elements. Suppose instead we keep an array which can hold up to $2s$ elements. 	When the array gets full, we throw out the largest $s$ elements. To do this, we first compute the $(s+1)^{th}$ smallest element in the array. This can be done in $O(s)$ time and $O(s)$ space via the classical median-of-medians algorithm. We then do a linear scan of the array, and write all elements strictly less than the median to	 a new array. To handle ties, we write a copy of the median to the new array, 	until it has $s$ elements. When we are given our final query, we again throw	 out large elements so that we only have $s$ elements left, and then return those.
	
	Updates now take amortized constant time: after $s$ updates, we take $O(s)$ time to clear out the large elements. The	final query takes $O(s)$ time, since we again need to throw out large 	elements. The space usage is $O(s)$ since we store up to $2s$ elements, and running median-of-medians takes $O(s)$ space. This completes the proof.
\end{proof}

We will call the above algorithm $\textsc{NextGroup}$. $\textsc{NextGroup}$ takes as input a value $v$, a natural number $s$, and a list of numbers $L$, and outputs the next $s$ elements of $L$ in sorted order after the value $v$. Other variations on deterministic \kselect{} algorithms are mentioned in Appendix \ref{app:kselect}.

\section{Subquadratic $3$-SUM implies Subquadratic small-space $3$-SUM}
\label{sec:3sum}
We will begin by using our building blocks to prove a self reduction for $3$-SUM. Then we will show three intriguing consequences of this self reduction.  First, the self reduction can be used to show a general theorem that takes subquadratic algorithms for $3$-SUM and produces subquadratic time algorithms that run in nearly $\sqrt{n}$ space. Second, we show that algorithms for $3$-SUM that are subquadratic by polylog factors can be used to obtain $3$-SUM algorithms with the same asymptotic running time and simultaneously small  space. Finally, we will prove that the Small-Space $3$-SUM conjecture is equivalent to the $3$-SUM conjecture.

\subsection{$3$-SUM Self Reduction}

We now proceed to solve $3$-SUM using our bucket retrieval subroutine. We will
use $\max S$ and $\min S$ to refer to the maximum and minimum elements of a list $S$, respectively.

As anticipated, we split the three arrays into groups of size $n/g$, and solve
$3$-SUM on subproblems of this size. Naively there are $O(g^3)$ subproblems to
solve, but we use Corollary~\ref{cor:ksum-decomp} to argue we only get $O(g^2)$
subproblems.

\begin{theorem}[3-SUM Self-Reduction Theorem]
	\label{thm:3sum-real}
	If $3$-SUM is solvable in $\rTISP(T(n), S(n))$ then for any $g$, $3$-SUM can be solved in
	$\rTISP(g^2 (n+T(n/g)), n/g + S(n/g))$.
\end{theorem}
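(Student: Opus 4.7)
The plan is to combine the Domination Lemma, in the form of Corollary~\ref{cor:ksum-decomp}, with the \nextgroup{} subroutine from Lemma~\ref{lem:nextgroup} to realize a ``virtual sort-and-partition'' that never materializes more than one subproblem at a time. I would virtually split $L$ into $g$ consecutive groups $L_1,\ldots,L_g$ of size $n/g$, where $L_i$ holds the elements of rank $(i-1)(n/g)+1$ through $i(n/g)$ in sorted order. Setting $m[0]=-\infty$, I compute the markers $m[i]=\max L_i$ by calling \nextgroup{} $g$ times in sequence: the $i$-th call uses $m[i-1]$ as its threshold and I record $m[i]$ as the largest value it returns. Each call costs $O(n)$ time and $O(n/g)$ space, so the full marker array is built in $O(gn)$ time using $O(n/g+g)$ space.

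Next I enumerate the triples $(i,j,k)\in[g]^3$ and, using the stored markers, classify each as ``trivial'' or ``nontrivial'' in $O(1)$ time via the test from the proof of Corollary~\ref{cor:ksum-decomp}; any boundary case with an $m$-sum equal to zero yields an immediate witness. This phase costs $O(g^3)$ time total. For each of the at most $O(g^2)$ surviving nontrivial triples, I re-invoke \nextgroup{} three times with the appropriate markers as thresholds to materialize $L_i,L_j,L_k$ one at a time, concatenate them into a single list of at most $3n/g$ values in the auxiliary workspace, and run the assumed $3$-SUM algorithm on that list in time $T(3n/g)=O(T(n/g))$ and space $O(S(n/g))$. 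The subproblem workspace is released before moving on, so the totals are $O(gn)+O(g^3)+O(g^2)(n+T(n/g))=O(g^2(n+T(n/g)))$ time and $O(n/g+g+S(n/g))$ peak space. Correctness follows because any genuine triple $(a,b,c)\in L^3$ with $a+b+c=0$ lies in some group-triple $(i,j,k)$ whose smallest sum is $\le 0$ and largest is $\ge 0$; that triple is thus either a boundary case resolved during classification or a nontrivial triple on which the base algorithm is invoked.

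The main obstacle I anticipate is shaving the additive $O(g)$ term from the peak space to match the claimed bound $O(n/g+S(n/g))$. When $g\le\sqrt{n}$ this is automatic since $g\le n/g$; for $g>\sqrt{n}$ I would drop the precomputed marker array altogether and instead recompute $m[i-1], m[j-1], m[k-1]$ on demand via three extra \nextgroup{} calls placed inside the subproblem loop, charging the extra $O(n)$ work per subproblem to the existing $O(g^2 n)$ time term. A minor soundness subtlety is that when $i,j,k$ are not distinct the concatenation repeats a group; since any $3$-sum found in the concatenated list is still a genuine $3$-sum in $L$, the reduction remains sound, and the completeness argument above is unaffected.
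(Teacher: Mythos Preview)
Your approach uses the same two ingredients as the paper (Corollary~\ref{cor:ksum-decomp} and \nextgroup{}) and is correct when $g\le\sqrt n$, but the execution differs in a way that matters for the stated space bound. The paper never stores a marker array at all: working in the $3$-SUM$'$ formulation, it runs two nested loops over groups $A'$ of $A$ and $B'$ of $B$, maintaining only the scalars $preva=\max A'$ and $prevb=\max B'$; for each pair $(A',B')$ it then starts $C'$ at the threshold $-\max A'-\max B'$ and walks forward through consecutive groups of $C$ while $\min C'\le -\min A'-\min B'$. This directly yields $O(g^2)$ subproblems and $O(n/g+S(n/g))$ space uniformly in $g$, with no additive~$g$ term and no explicit $g^3$ enumeration.

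Your proposed fix for the $g>\sqrt n$ regime has a gap as written. A single \nextgroup{} call returns the next $n/g$ elements after a \emph{given} threshold, so producing $m[i-1]$ from scratch requires $i-1$ chained calls, not one; and more fundamentally, once you drop the marker array you have no $O(1)$-time way to classify a triple $(i,j,k)$ as trivial or nontrivial, so there is no pre-identified ``subproblem loop'' of size $O(g^2)$ in which to place the recomputations. The clean repair is exactly the paper's nested-loop structure, which keeps only running thresholds and discovers the relevant $C$-groups by scanning from $-\max A'-\max B'$ rather than by pre-classification. A minor secondary point: when $i=j$ your concatenation duplicates $L_i$, so a $3$-sum in the concatenated list that uses two copies of the same physical element of $L$ would be a false positive under the ``distinct $a_1,\dots,a_k$'' convention; the paper's three-list formulation sidesteps this as well.
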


\begin{proof}
	Consider the following algorithm.
	
	\begin{algorithm}[H]
		\label{alg:3sum-real}
		Set $preva = -\infty$\;
		\For{$i \in [0, g-1]$}{
			Set $A' = \textsc{NextGroup}(A, preva, n/g + 1)$\;
			Set $prevb = -\infty$\;
			\For{$j \in [0, g-1]$}{
				Set $B' = \textsc{NextGroup}(B, prevb, n/g + 1)$\;
				Set $C' = \textsc{NextGroup}(C, - \max A' - \max B', n/g + 1)$\;
				\While{$\min C' \le - \min A' - \min B'$}{
					\If{$3$-SUM($A', B', C'$) returns true}{
						\Return true\;
					}
					Set $C' = \textsc{NextGroup}(C, \max C', n/g + 1)$\;
				}
				Set $prevb = \max B'$\;
			}
			Set $preva = \max A'$\;
		}
		\Return false\;
		\caption{$3$-SUM Algorithm}
	\end{algorithm}
	
	Algorithm~\ref{alg:3sum-real} is correct because we consider all possible
	elements of $C$ where the sum of elements from $A'$ and $B'$ could land, and
	the choices of $A'$ and the choices of $B'$ cover all of $A$ and $B$,
	respectively. If there are multiple copies of a value in a list we will fail to
	list all copies only if it already appeared in a previous sublist. This will not affect
	correctness because the value will have already been analyzed.
	
	It's easy to see that the algorithm calls $\textsc{NextGroup}$ $O(g)$ times
	for $A'$, $O(g^2)$ times for $B'$. We claim that we also only call it $O(g^2)$
	times for $C'$. To show this, we want to apply
	Corollary~\ref{cor:ksum-decomp}. Unfortunately, the groups of $C$ that we
	extract don't always line up with our ideal $n/g$ division; since we start at
	$- \max A' - \max B'$, we may not align at the endpoints of blocks.
	Fortunately, we've only introduced an extra $O(1)$ possibilities of $C'$ for
	every $(A', B')$ pair, or $O(g^2)$ extras total. Hence we still only make
	$O(g^2)$ calls to $\textsc{NextGroup}$. By Lemma~\ref{lem:nextgroup}, these
	calls will require $O(ng^2)$ time and $O(n/g)$ space.
	
	Our algorithm also calls the $\rTISP(T(n), S(n))$ algorithm for $3$-SUM
	$O(g^2)$ times on instances of size $O(n/g)$, which requires $O(g^2T(n/g))$
	time and $O(S(n/g))$ space.
	
	We have shown Algorithm~\ref{alg:3sum-real} is correct and has the desired
	runtime and space usage, so this completes the proof.
\end{proof}

\subsection{General Theorem for Space Reduction}

Our self-reduction for $3$-SUM yields the following intriguing consequence: \emph{subquadratic-time algorithms for $3$-SUM imply 
	subquadratic-time small-space algorithms for $3$-SUM}. Plugging this connection into known $3$-SUM algorithms, we can automatically obtain more space-efficient $3$-SUM algorithms for free. From a complexity-theoretic point of view, the consequence is perhaps even more intriguing: it means that the $3$-SUM Conjecture is \emph{equivalent} to the statement that there is no subquadratic-time $n^{0.51}$-space $3$-SUM algorithm, even though the latter appears to be a more plausible lower bound(!).

We begin by stating our generic space reduction theorem.

\begin{theorem}[$3$-SUM Space Reduction] \label{thm:spacereduction} Suppose $3$-SUM is solvable in $n^2/f(n)$ time, where $1 \leq f(n) \leq n$. Then $3$-SUM is solvable by an algorithm running in $O(n^2/f(n/g))$ time and $O(n/h)$ space simultaneously, where $g(n),h(n) \in [1,n]$ satisfy the relations \[g(n) \geq \Omega \left(\sqrt{\frac{n}{f(n/g(n))}}\right) \textrm{~~~and~~~} h^2 + \frac{n}{f(n/(h g(n/h)))} \leq O\left(\frac{n}{f(n/g(n))}\right).\]
\end{theorem}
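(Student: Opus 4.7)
The plan is to apply the $3$-SUM Self-Reduction Theorem (Theorem~\ref{thm:3sum-real}) twice in succession: a first pass with group parameter $g(n)$ to obtain an intermediate algorithm $A_1$, and a second pass with group parameter $h$, using $A_1$ itself as the recursive $3$-SUM solver, to produce the final algorithm $A_2$. The two conditions in the statement are exactly the bookkeeping requirements ensuring that $A_2$ has time $O(n^2/f(n/g(n)))$ and space $O(n/h)$.

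For the first pass, invoke Theorem~\ref{thm:3sum-real} on the hypothesized $O(n^2/f(n))$-time (and trivially $O(n)$-space) algorithm with group parameter $g(n)$. A direct substitution gives that $A_1$ runs in time
\[
T_1(n) \;=\; O\!\left(g(n)^2 n + g(n)^2 (n/g(n))^2/f(n/g(n))\right) \;=\; O\!\left(g(n)^2 n + n^2/f(n/g(n))\right)
\]
and space $O(n/g(n))$. The two summands in $T_1$ represent, respectively, the reduction overhead ($O(g(n)^2)$ subproblems of linear set-up cost) and the total cost of the $O(g(n)^2)$ base-algorithm calls on sub-instances of size $n/g(n)$. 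Condition~1 says that $g(n)^2 n$ is at least $n^2/f(n/g(n))$ up to constants, so the two terms are of comparable order and can be treated interchangeably for the subsequent bookkeeping.

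For the second pass, invoke Theorem~\ref{thm:3sum-real} a second time with $A_1$ playing the role of the inner $3$-SUM solver and group parameter $h$. The space bound is
\[
S_2(n) \;=\; O\!\left(n/h + S_1(n/h)\right) \;=\; O\!\left(n/h + (n/h)/g(n/h)\right) \;=\; O(n/h),
\]
using $g(n/h) \geq 1$. Expanding $T_1(n/h)$ inside $T_2(n) = O(h^2 (n + T_1(n/h)))$ gives
\[
T_2(n) \;=\; O\!\left(h^2 n \;+\; h\, g(n/h)^2\, n \;+\; n^2/f(n/(h\, g(n/h)))\right).
\]
Applying condition~1 at input size $n/h$ lets us identify $h\, g(n/h)^2$ with $n/f(n/(h\, g(n/h)))$ up to constants, so the middle ``cross term'' collapses into the third; condition~2 then directly bounds $h^2 + n/f(n/(h\, g(n/h)))$ by $O(n/f(n/g(n)))$, yielding $T_2(n) = O(n^2/f(n/g(n)))$ as claimed. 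The main technical obstacle is precisely the management of this cross term, which is an algebraic artifact of composing the two self-reductions; condition~1 is the self-referential constraint on $g$ engineered exactly to control it, and condition~2 then handles the resulting additive combination of overhead and inner cost.
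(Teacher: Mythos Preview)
Your overall strategy---apply the self-reduction twice---matches the paper's, but there is a genuine gap in the first step. You assume the hypothesized $n^2/f(n)$-time algorithm is ``trivially $O(n)$-space''; however, the theorem's hypothesis bounds only time, and the motivating examples (e.g.\ the Gold--Sharir algorithm) use substantially more than linear space. The paper does not make this assumption. Instead it takes $S(n)=T(n)=n^2/f(n)$ and uses the first application of the self-reduction specifically to force the space down to $O(n)$: condition~1 is exactly the statement $T(n/g)=O(n)$, so after the first pass the space is $n/g+S(n/g)=n/g+T(n/g)=O(n)$. That is the actual role of condition~1 in the paper's argument---controlling the \emph{space} of the intermediate algorithm---rather than balancing the two time summands as you use it.

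A knock-on effect of this difference is that the paper's intermediate algorithm has a single-term time bound $n^2/f(n/g(n))$, so the second pass yields only the two terms $h^2 n + n^2/f\bigl(n/(h\,g(n/h))\bigr)$, and condition~2 applies directly. Your route instead produces the extra cross term $h\,g(n/h)^2\,n$, and your attempt to absorb it into the third term via condition~1 uses the inequality in the wrong direction: condition~1 at input size $n/h$ gives $h\,g(n/h)^2 \geq \Omega\bigl(n/f(n/(h\,g(n/h)))\bigr)$, a \emph{lower} bound on the cross term, so it does not collapse into the third term without an additional upper bound on $g$ that the theorem statement does not provide.
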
 
\begin{proof} We will apply our Self-Reduction Theorem for $3$-SUM (Theorem~\ref{thm:3sum-real}) in two different ways. First, we will use the self-reduction (and the constraint on $g(n)$) to convert our $3$-SUM algorithm into a linear-space algorithm, with a modest increase in running time (if at all). Pushing the linear-space algorithm through the self-reduction once more will reduce the space bound further, without increasing the running time asymptotically (using the constraint on $h(n)$).
	
	Let $T(n) := n^2/f(n)$. Set the parameter $g(n) \geq 1$ to satisfy  
	\begin{equation}\label{eqn:setg} T(n/g) = \frac{n^2/g^2}{f(n/g)} = O(n); \textrm{~~or, equivalently~~} g = \Omega \left( \sqrt{\frac{n}{f(n/g)}} \right).\end{equation} 
	Assuming $g$ satisfies \eqref{eqn:setg}, applying the $3$-SUM Self-Reduction (Theorem~\ref{thm:3sum-real}) with $T(n) = S(n)$ and $g$, we can then solve $3$-SUM in \begin{equation}\label{eqn:linspace}\rTISP\left(g^2(n+T(n/g)), n/g + T(n/g)\right) = \rTISP \left(\frac{n^2}{f(n/g)}, n\right).\end{equation}
	
	Now, set new time and space bounds $T(n) := n^2/f(n/g(n))$, $S(n) = n$ from \eqref{eqn:linspace}. Then, applying the $3$-SUM Self-Reduction (Theorem~\ref{thm:3sum-real}) with the new $T(n)$, $S(n)$ and some parameter $h$, we can then solve $3$-SUM in $\rTISP(h^2(n+T(n/h)), n/h + S(n/h)) =$
	\begin{equation*}\label{eqn:linspaces} \rTISP\left(h^2 \left(n+\frac{n^2/h^2}{f(n/(h g(n/h)))}\right),n/h\right) \subseteq \rTISP \left(\frac{n^2}{f(n/g)}, n/h\right),\end{equation*} by our hypothesis on $h$.
\end{proof}

\subsection{Space-Efficient Fast $3$-SUM}

When we apply Theorem~\ref{thm:spacereduction} directly to known algorithms, we obtain immediate space improvements with negligible loss in running time. Very recently, Gold and Sharir~\cite{gold2015improved} have given a faster $3$-SUM algorithm in the real-RAM model, building on the work of Gronlund and Pettie~\cite{gronlund2014threesomes}:

\begin{theorem}[Gold and Sharir~\cite{gold2015improved}]
	3-SUM can be solved in $O(n^2 \lg\lg(n)/ \lg(n))$ time over the reals and integers. \label{thm:gold3SUM}
\end{theorem}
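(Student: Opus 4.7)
Since this theorem is attributed to Gold and Sharir~\cite{gold2015improved}, the paper uses it as a black box and does not reprove it; any proof proposal must therefore be a sketch of their method, which itself extends the framework of Gronlund and Pettie~\cite{gronlund2014threesomes}. My plan is to outline the key ingredients rather than reproduce the full argument in this paper.

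The plan is as follows. First, sort $L$ in $O(n \lg n)$ time and reduce the $3$-SUM problem on one list to the $3$-SUM$'$ problem on three sorted copies. Partition two of the sorted copies into $n/g$ blocks of size $g = \Theta(\lg n/\lg\lg n)$. For each pair of blocks $(B_i, B_j)$, the subproblem is to detect whether any $c$ in the third sorted list satisfies $-c \in B_i + B_j$. A naive scan gives $O(g^2 \lg g + n)$ per pair, which sums to slightly more than $n^2$, so a logarithmic-factor speed-up is needed.

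The central tool is Fredman's trick: the comparison $b_1 + b_1' \le b_2 + b_2'$ for $b_1, b_2 \in B_i$ and $b_1', b_2' \in B_j$ is equivalent to $b_1 - b_2 \le b_2' - b_1'$, which involves only intra-block differences. Pre-sorting all intra-block differences once yields, for each block pair, a short comparison pattern (a bitstring of length $O(g^2)$). A precomputed table of sorting permutations, indexed by this pattern, returns the sorted order of $B_i + B_j$ in $O(g^2/\lg g)$ amortized time per pair on the word-RAM, or via an equivalent sorting decision tree on the real-RAM. Merging this sorted order with a scan of the third list gives per-pair cost $O(g^2/\lg g + n)$, and summing over $O((n/g)^2)$ pairs with $g = \Theta(\lg n/\lg\lg n)$ yields the claimed $O(n^2 \lg\lg n/\lg n)$ total.

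The main obstacle is the deterministic construction of the lookup table (or small-depth sorting decision tree) in sub-quadratic preprocessing time and without randomization. This is precisely the step where Gronlund-Pettie fell short of the real target deterministically (obtaining only $O(n^2(\lg\lg n/\lg n)^{2/3})$), and where Gold-Sharir's sharper sorting-network construction and a more careful amortized analysis recover the full $\lg\lg n/\lg n$ factor. Because this construction is intricate and somewhat model-specific (the real-RAM and word-RAM versions require separate care to ensure the table/decision tree can be built and queried in the required time), I would not attempt to re-derive it here and would instead invoke \cite{gold2015improved} as a black box, as the authors do.
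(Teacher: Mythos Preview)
Your assessment is correct: the paper states this theorem purely as a citation to Gold and Sharir~\cite{gold2015improved} and provides no proof of its own, so there is nothing to compare against beyond your (accurate) observation that it is used as a black box. The sketch you give of the Fredman-trick/block-decomposition framework is a faithful outline of the Gronlund--Pettie/Gold--Sharir approach and already goes beyond what the paper itself supplies.
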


As discussed in the introduction, their novel approach uses quite a bit of space. Applying Theorem~\ref{thm:spacereduction}, we can reduce the space usage to only $O \left(\sqrt{n\lg(n)/\lg\lg(n)}\right)$, with the same asymptotic running time of Gold and Sharir. 

\begin{corollary}[Space-Efficient $3$-SUM Algorithm] \label{cor:savelogs3sum}
	3-SUM is in $\rTISP \left(n^2 \frac{\lg\lg(n)}{\lg(n)}, \sqrt{\frac{n \lg(n)}{\lg\lg(n)}}\right)$.
\end{corollary}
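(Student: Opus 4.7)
The plan is to invoke Theorem~\ref{thm:spacereduction} with the speed-up factor $f(n) = \lg(n)/\lg\lg(n)$ given by the Gold--Sharir algorithm (Theorem~\ref{thm:gold3SUM}), and show that the choices $g(n) = \Theta\!\left(\sqrt{n\lg\lg(n)/\lg(n)}\right)$ and $h(n) = \Theta\!\left(\sqrt{n\lg\lg(n)/\lg(n)}\right)$ satisfy both constraints of that theorem and yield exactly the stated tradeoff point.

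The key observation that makes everything go through cleanly is that $f(n) = \lg n/\lg\lg n$ is so slowly growing that whenever $m$ is polynomially related to $n$ (which will be the case for all arguments we plug into $f$, since $g$ and $h$ are at most roughly $\sqrt{n}$), we have $f(m) = \Theta(f(n))$. I would isolate this as a one-line fact at the start of the proof and then use it silently.

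With that in hand, I would verify the first constraint $g \geq \Omega(\sqrt{n/f(n/g)})$. Since $n/g = \Theta(\sqrt{n\lg(n)/\lg\lg(n)})$ is polynomial in $n$, we get $f(n/g) = \Theta(f(n))$, so $\sqrt{n/f(n/g)} = \Theta(\sqrt{n\lg\lg(n)/\lg(n)}) = \Theta(g)$, as required. The resulting time bound is $n^2/f(n/g) = \Theta(n^2\lg\lg(n)/\lg(n))$, matching the target. Next, for the second constraint, I would compute $g(n/h) = \Theta(\sqrt{(n/h)\lg\lg(n)/\lg(n)})$, hence $hg(n/h) = \Theta(\sqrt{nh\lg\lg(n)/\lg(n)})$, so the argument $n/(hg(n/h))$ is again polynomial in $n$ and $f(n/(hg(n/h))) = \Theta(f(n))$. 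The constraint therefore collapses to $h^2 = O(n/f(n)) = O(n\lg\lg(n)/\lg(n))$, which is tight for our choice of $h$. Finally, the space bound is $n/h = \Theta(\sqrt{n\lg(n)/\lg\lg(n)})$, completing the corollary.

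The proof is essentially a direct plug-in, so there is no real obstacle; the only thing to be careful about is the self-referential form of the hypotheses (both $g$ and $h$ appear inside $f$-arguments involving themselves). The cleanest way to handle this is to first define $g$ and $h$ as above, then verify that both are $O(\sqrt{n})$ (so $n/g$, $n/h$, $hg(n/h)$ are all polynomially related to $n$), and only then check that $f$ evaluated at these arguments stays within a constant factor of $f(n)$. Once this bookkeeping is out of the way, the inequalities are immediate.
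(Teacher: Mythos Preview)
Your proposal is correct and follows essentially the same approach as the paper: apply Theorem~\ref{thm:spacereduction} with $f(n)=\lg n/\lg\lg n$ and $g(n)=h(n)=\sqrt{n\lg\lg n/\lg n}$, use the fact that $f$ evaluated at any polynomial in $n$ is $\Theta(f(n))$ to collapse all the $f$-arguments, and then verify the two constraints directly. The paper organizes the same computation slightly differently (it writes out equations \eqref{eqn:ng} and \eqref{eqn:nhg} explicitly rather than stating the ``polynomially related'' observation once), but there is no substantive difference.
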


\begin{proof} We shall apply Theorem~\ref{thm:spacereduction}. First, set $f(n) := \lg(n)/\lg\lg(n)$, so that $3$-SUM is solvable in $O(n^2/f(n))$ time by Theorem~\ref{thm:gold3SUM}. 
	
	Set $g(n) := \sqrt{\frac{n \lg\lg(n)}{\lg(n)}}$ and $h(n) := \sqrt{\frac{n \lg\lg(n)}{\lg(n)}}$. By our choice of $f(n)$ and basic properties of logarithms, observe that \begin{equation}\label{eqn:ng}f(n/g) = f(\tilde{O}(\sqrt{n})) = \Theta(f(n)), \end{equation} and furthermore  
	\begin{equation}\label{eqn:nhg}f(n/(h g(n/h))) = f\left(\tilde{O}(\sqrt{n})/\Tilde{O}(n^{1/4})\right) = \Theta(f(n)).\end{equation} 
	
	By \eqref{eqn:ng}, we have \[g = \sqrt{\frac{n \lg\lg(n)}{\lg(n)}} \geq \Omega \left(\sqrt{\frac{n}{f(n/g)}}\right), \textrm{~so the first constraint of Theorem~\ref{thm:spacereduction} is satisfied.}\] Moreover, by \eqref{eqn:nhg} we have 
	\[
	h^2 + \frac{n}{f(n/(h g(n/h)))} = \frac{n \lg\lg(n)}{\lg(n)} + \frac{n}{\Theta(f(n))}, \textrm{~which is~} O \left(\frac{n}{f(n/g)}\right) \textrm{~by \eqref{eqn:ng}}.\]
	Therefore the second constraint of Theorem~\ref{thm:spacereduction} is also satisfied, and $3$-SUM is solvable by an algorithm running in  $O(n^2/f(n))$ time and $O\left(\sqrt{n f(n)} \right)$ space simultaneously.  
\end{proof} 

In general, Theorem~\ref{thm:spacereduction} provides a generic \emph{reduction} from faster $3$-SUM algorithms to faster space-efficient $3$-SUM algorithms. To illustrate:

\begin{corollary} If 3-SUM is solvable in $O(n^2/\lg^a(n))$ time for some constant $a > 0$, then $3$-SUM is in $\rTISP(n^2/\lg^a(n), \sqrt{n}\lg^{a/2}(n))$.
\end{corollary}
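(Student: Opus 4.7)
The plan is to apply Theorem~\ref{thm:spacereduction} directly with $f(n) := \lg^a(n)$ and parameters $g(n) := h(n) := \sqrt{n}/\lg^{a/2}(n)$. These choices are dictated by matching what happened in Corollary~\ref{cor:savelogs3sum}: there $g$ and $h$ were both chosen as $\sqrt{n/f(n)}$, producing final space $n/h = \sqrt{n f(n)}$. Here that recipe gives space $n/h = \sqrt{n}\lg^{a/2}(n)$ exactly as claimed, and time $n^2/f(n/g)$, which we will confirm is $O(n^2/\lg^a(n))$.

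The key observation driving the calculation is that $\lg^a$ is insensitive to polynomial-with-polylog shrinkage: for any fixed $c > 0$, $\lg(n^c \cdot \mathrm{polylog}(n)) = \Theta(\lg n)$, so $f(n^c \cdot \mathrm{polylog}(n)) = \Theta(f(n))$. This handles both $f(n/g)$ and $f(n/(h\,g(n/h)))$ in one stroke. Concretely, $n/g = \sqrt{n}\lg^{a/2}(n)$, so $f(n/g) = \Theta(\lg^a n) = \Theta(f(n))$. For the nested expression, one computes $g(n/h) = \Theta(n^{1/4}/\lg^{a/4}(n))$, hence $h\,g(n/h) = \Theta(n^{3/4}/\lg^{3a/4}(n))$, giving $n/(h\,g(n/h)) = \Theta(n^{1/4}\lg^{3a/4}(n))$, and again $f$ of this quantity is $\Theta(\lg^a n)$.

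With these estimates in hand, the two constraints of Theorem~\ref{thm:spacereduction} are routine to check. The first constraint reads $g \geq \Omega(\sqrt{n/f(n/g)})$; with $f(n/g) = \Theta(\lg^a n)$ the right-hand side is $\Theta(\sqrt{n}/\lg^{a/2}(n)) = \Theta(g)$. The second constraint is $h^2 + n/f(n/(h g(n/h))) \leq O(n/f(n/g))$; the right-hand side is $\Theta(n/\lg^a n)$, while $h^2 = n/\lg^a(n)$ and $n/f(n/(h g(n/h))) = \Theta(n/\lg^a n)$, so the inequality holds. Invoking Theorem~\ref{thm:spacereduction} then yields a $3$-SUM algorithm running in $O(n^2/f(n/g)) = O(n^2/\lg^a n)$ time and $O(n/h) = O(\sqrt{n}\lg^{a/2}(n))$ space simultaneously.

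I do not anticipate a genuine obstacle here; the proof is essentially a plug-in. The only mild care needed is verifying that the $\Theta(\cdot)$ slippage in $f(n/g)$ and $f(n/(h g(n/h)))$ really is only a constant factor for arbitrary constant $a > 0$, which follows because $\lg$ of any fixed positive power of $n$ times any polylog in $n$ is $\Theta(\lg n)$.
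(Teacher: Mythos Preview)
Your proof is correct and follows essentially the same approach as the paper: both apply Theorem~\ref{thm:spacereduction} with $f(n)=\lg^a n$ and $g(n)=h(n)=\sqrt{n}/\lg^{a/2}(n)$, verify the two constraints using $f(n/g)=\Theta(\lg^a n)$ and $f(n/(h\,g(n/h)))=\Theta(\lg^a n)$, and read off the resulting time and space bounds. Your computation of the nested quantity $n/(h\,g(n/h))$ is slightly more explicit than the paper's, but the argument is otherwise identical.
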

\begin{proof} We apply Theorem~\ref{thm:spacereduction}. By assumption we have $3$-SUM in $O(n^2/f(n))$ time, where $f(n) = \lg^a n$. Set $g(n) := \sqrt{n}/ \lg^{a/2}(n)$, and $h(n) :=  \sqrt{n}/\lg^{a/2}(n)$. Note that $f(n/g(n)) = \Theta(\lg^a(n))$ and $f(n/(h(n)\cdot g(n/h(n)))) = \Theta(\lg^a(n))$, similar to Corollary~\ref{cor:savelogs3sum}. Therefore
	
	\[g(n) = \sqrt{n}/ \lg^{a/2}(n) \geq \Omega \left(\sqrt{\frac{n}{f(n/g(n))}}\right)\] and \[h^2 + \frac{n}{f(n/(h g(n/h)))} \leq O\left( \frac{n}{\lg^a n}\right) \leq O\left(\frac{n}{f(n/g(n))}\right).\] Hence Theorem~\ref{thm:spacereduction} applies to these settings of the parameters, and $3$-SUM is in $O(n^2/f(n/g)) = O(n^2/\lg^a(n))$ time and $O(n/h) = O(\sqrt{n}\lg^{a/2}(n))$ space.	  	  	  
\end{proof}

\subsection{The $3$-SUM Conjecture and Small Space}

Finally, we use the Space Reduction Theorem (Theorem~\ref{thm:spacereduction}) to show that the $3$-SUM conjecture is false, then it is also false with respect to small-space algorithms.


\begin{lemma}
	If $3$-SUM is in $O(n^{2-\epsilon})$ time for some $\epsilon > 0$, then for every $\alpha >0 $, there is a $\delta > 0$ such that $3$-SUM is solvable in $O(n^{2-\delta})$ time and $O(n^{1/2+\alpha})$ space, simultaneously.
	\label{lem:3sumfasterepsilons}
\end{lemma}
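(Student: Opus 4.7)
The plan is to iterate the self-reduction (Theorem~\ref{thm:3sum-real}) multiple times. Each iteration takes an algorithm with time $n^{2-\epsilon_{i-1}}$ and produces one with a smaller space bound, at the cost of slightly smaller time savings; after finitely many iterations the space reaches $n^{1/2+\alpha}$ while the time remains $n^{2-\delta}$ for some $\delta > 0$.

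Start with $T_0(n) = n^{2-\epsilon}$ and the trivial bound $S_0(n) \leq n^{2-\epsilon}$. Apply Theorem~\ref{thm:3sum-real} with parameter $g_1 := n^{(1-\epsilon)/(2-\epsilon)}$: by construction, this value balances the two terms $g_1^2 n$ and $g_1^\epsilon n^{2-\epsilon}$ in the new time bound $g_1^2(n + T_0(n/g_1))$, both of which evaluate to $n^{(4-3\epsilon)/(2-\epsilon)} = n^{2-\epsilon/(2-\epsilon)}$; likewise both space terms $n/g_1$ and $(n/g_1)^{2-\epsilon}$ become $O(n)$. The resulting algorithm runs in time $n^{2-\epsilon_1}$ and space $O(n)$, where $\epsilon_1 := \epsilon/(2-\epsilon) > 0$.

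Now iterate: in the $i$-th step ($i \geq 2$), given an algorithm with time $n^{2-\epsilon_{i-1}}$ and at most linear space, apply Theorem~\ref{thm:3sum-real} with $g_i := n^{(1-\epsilon_{i-1})/(2-\epsilon_{i-1})}$. The same balancing argument yields a new time bound $n^{2-\epsilon_i}$ with $\epsilon_i := \epsilon_{i-1}/(2-\epsilon_{i-1})$, and the space reduces to $O(n/g_i) = O(n^{1/(2-\epsilon_{i-1})})$ (the $S_{i-1}(n/g_i)$ contribution is no larger since the input's space is already at most linear). The recurrence, together with $\epsilon_{i-1} \leq \epsilon$, gives $\epsilon_i \leq \epsilon/(2-\epsilon)^i$, so $\epsilon_i \to 0$ geometrically.

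Given $\alpha > 0$, pick $k$ large enough that $1/(2-\epsilon_{k-1}) \leq 1/2 + \alpha$, i.e.\ $\epsilon_{k-1} \leq 4\alpha/(1+2\alpha)$; this requires only $k = O(\log(1/\alpha))$ iterations. After the $k$-th iteration we obtain $3$-SUM in $O(n^{2-\delta})$ time and $O(n^{1/2+\alpha})$ space, where $\delta := \epsilon_k > 0$. The only technical obstacle is the exponent bookkeeping; the crucial point is that our choice of $g_i$ exactly balances the two time terms in Theorem~\ref{thm:3sum-real}, ensuring each iteration preserves the clean form $n^{2-\epsilon_i}$ of the time bound without picking up stray polynomial factors.
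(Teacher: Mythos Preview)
Your proof is correct, but it takes a different route from the paper's. The paper applies the self-reduction only \emph{twice}: the first application is identical to your first step (choosing $g = n^{(1-\epsilon)/(2-\epsilon)}$ to reach linear space and time exponent $2-\epsilon/(2-\epsilon)$), but for the second application the paper abandons the balancing choice and instead sets $g = n^{1/2-\alpha}$ directly, so that $n/g = n^{1/2+\alpha}$ in one shot. This yields time $n^{2-2\alpha} + n^{2-(1/2\pm\alpha)\epsilon/(2-\epsilon)}$ and space $O(n^{1/2+\alpha})$, giving $\delta = \min\bigl(2\alpha,\ (1/2\pm\alpha)\epsilon/(2-\epsilon)\bigr)$.

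Your approach keeps the time-balancing choice $g_i = n^{(1-\epsilon_{i-1})/(2-\epsilon_{i-1})}$ throughout and iterates $O(\log(1/\alpha))$ times until the space exponent $1/(2-\epsilon_{k-1})$ drops below $1/2+\alpha$. This is conceptually cleaner (one rule repeated) and shows the tradeoff curve more transparently, at the cost of more iterations and somewhat messier bookkeeping of the sequence $\epsilon_i$. The resulting $\delta$ is of the same order, roughly $\Theta(\alpha)$ for small $\alpha$. The paper's two-step argument is shorter and gives an explicit closed-form $\delta$; your iterative argument makes the mechanism more visible and would generalize more readily if one wanted intermediate space bounds.
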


\begin{proof} 
	
	The proof of Theorem~\ref{thm:spacereduction} applies the $3$-SUM Self Reduction (Theorem~\ref{thm:3sum-real}) twice. We will basically perform the first part of the proof of Theorem~\ref{thm:spacereduction}, but instead of applying the second part of the proof, we have to choose a different setting of parameters, focused on minimizing the space usage instead of preserving running time. 
	
	Let $T(n) := n^2/f(n)$ with $f(n) = n^{\epsilon}$. We first reduce the space usage of the algorithm to linear. To this end, set $g(n) := n^{(1-\epsilon)/(2-\epsilon)}$. Then, applying the $3$-SUM Self-Reduction (Theorem~\ref{thm:3sum-real}) with $T(n) = S(n)$ and $g(n)$, we can then solve $3$-SUM in \begin{equation*}\rTISP(g^2(n+T(n/g)), n/g + T(n/g)) = \rTISP \left(\frac{n^2}{f(n/g)}, n\right)
	= \rTISP \left(\frac{n^2}{n^{\epsilon/(2-\epsilon)}}, n\right).\end{equation*}
	
	Now reset $f(n) := n^{\epsilon/(2-\epsilon)}$, and reset $g(n) := n^{1/2+\alpha}$ with $\alpha \in (0,1/2)$. Applying the $3$-SUM Self-Reduction (Theorem~\ref{thm:3sum-real}) with $T(n) = n^2/f(n)$, $S(n) = n$, and $g(n)$ as above, 
	we find an algorithm for $3$-SUM in 
	$$\rTISP\left(n^{2-2\alpha}+ n^{2- (1/2-\alpha)\epsilon/(2-\epsilon)},  n^{1/2+\alpha}\right).$$
	Note that for all $\epsilon > 0$ and $\alpha \in (0,1/2)$, the running time bound is truly subquadratic. Further note that for any $\alpha \geq 1/2$, we only have more space to work with, so we clearly obtain $O(n^{2-\delta})$ time and $O(n^{1/2+\alpha})$ space (for some $\delta > 0$) in that case as well.
\end{proof}

This lemma can be applied to show that the $3$-SUM Conjecture is equivalent to seemingly much weaker statement:

\begin{reminder}{The Small-Space $3$-SUM Conjecture (Conjecture~\ref{conj:better3SUM})} 
	On a word-RAM with $O(\log n)$-bit words, there exists an $\epsilon > 0$ such that every algorithm that solves $3$-SUM in $O(n^{1/2+\epsilon})$ space  must take at least $n^{2-o(1)}$ time.
\end{reminder}

\begin{theorem}
	The Small-Space $3$-SUM Conjecture is equivalent to the $3$-SUM Conjecture. 
\end{theorem}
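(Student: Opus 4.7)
The plan is to show the two directions of the equivalence separately; only the reverse direction requires any work, and that work has essentially already been done in Lemma~\ref{lem:3sumfasterepsilons}.

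For the forward direction, I would simply observe that the $3$-SUM Conjecture is a universal statement over all algorithms for $3$-SUM: every algorithm must take $n^{2-o(1)}$ time. In particular, any algorithm that happens to use only $O(n^{1/2+\epsilon})$ space must take $n^{2-o(1)}$ time, which is exactly the Small-Space $3$-SUM Conjecture (for any choice of $\epsilon > 0$, hence also for some $\epsilon > 0$).

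For the reverse direction, I would argue by contrapositive: assume the $3$-SUM Conjecture fails, so there exist $\epsilon > 0$ and a $3$-SUM algorithm running in $O(n^{2-\epsilon})$ time (with no restriction on space). The goal is to refute the Small-Space $3$-SUM Conjecture, i.e., to show that for \emph{every} $\alpha > 0$, there is a $3$-SUM algorithm running in $O(n^{1/2+\alpha})$ space and $O(n^{2-\delta})$ time for some $\delta = \delta(\alpha) > 0$. This is exactly the conclusion of Lemma~\ref{lem:3sumfasterepsilons}, so the theorem follows by directly invoking that lemma.

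There is no real obstacle: all of the technical content has been moved into Lemma~\ref{lem:3sumfasterepsilons}, which in turn is a careful two-step application of the $3$-SUM Self-Reduction (Theorem~\ref{thm:3sum-real}) with parameters chosen to first drive the space down to linear and then drive it down to $n^{1/2+\alpha}$ while preserving a polynomial savings in time. The only thing to be careful about in writing the proof is to match quantifier structure between the two conjectures --- the Small-Space $3$-SUM Conjecture asserts existence of some $\epsilon > 0$, whereas Lemma~\ref{lem:3sumfasterepsilons} delivers a small-space subquadratic algorithm for \emph{every} $\alpha > 0$, which is strictly stronger than needed and therefore clearly suffices to negate the Small-Space Conjecture.
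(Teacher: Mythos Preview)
Your proposal is correct and follows essentially the same approach as the paper: the forward direction is observed to be trivial, and the reverse direction is proved by contrapositive via a direct invocation of Lemma~\ref{lem:3sumfasterepsilons}. Your discussion of the quantifier matching is slightly more explicit than the paper's, but the argument is the same.
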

\begin{proof}
	It suffices to show that the $3$-SUM Conjecture if true implies the Small-Space $3$-SUM Conjecture and that the refutation of the $3$-SUM Conjecture implies the Small-Space $3$-SUM Conjecture. First, we observe that the $3$-SUM Conjecture trivially implies the Small-Space $3$-SUM Conjecture. 
	
	Suppose the $3$-SUM Conjecture is false. Then a $O(n^{2-\epsilon})$ time algorithm for $3$-SUM exists, and Lemma \ref{lem:3sumfasterepsilons} implies that for every $\alpha > 0$, there is a $\delta > 0$ such that $3$-SUM is solvable in $O(n^{2-\delta})$ time and $O(n^{1/2+\alpha})$ space, simultaneously. But this means that for any choice of $\epsilon'>0$ for the Small-Space $3$-SUM Conjecture, we can find a truly-subquadratic $3$-SUM algorithm that uses only $O(n^{1/2+\epsilon'/2})$ space. This would falsify the Small-Space $3$-SUM Conjecture. 
\end{proof}

We conclude that, in order to prove the $3$-SUM conjecture, it is sufficient to prove that no algorithm can solve $3$-SUM in $\rTISP(n^{2-\epsilon}, n^{0.51})$ for some $\epsilon>0$.

\section{$k$-SUM}
\label{sec:ksum}
\subsection{$k$-SUM Self-Reduction}

We now generalize from $3$-SUM to $k$-SUM. Again, we plan to split the lists
into $g$ groups of size $O(n/g)$. By Corollary~\ref{cor:ksum-decomp}, we will
have only $O(g^{k-1})$ subproblems of size $O(n/g)$. Unlike $3$-SUM, where we
just used the naive algorithm to solve subproblems, in this section we use a
general algorithm; we reduce from $k$-SUM to itself (albeit on smaller instances).

\begin{theorem}
	\label{thm:real-self-reduction}
	Suppose real $k$-SUM can be solved in $\rTISP(T(n), S(n))$. Then for any $g$,
	it can also be solved in $\rTISP(g^{k-1}(n + T(n/g)), n/g + S(n/g))$.
\end{theorem}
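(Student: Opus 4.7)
The plan is to generalize Algorithm~\ref{alg:3sum-real} to $k$ dimensions: conceptually split $L$ into $g$ consecutive value-ordered groups of size $n/g$, enumerate the $k$-tuples of groups that could contain a $k$-SUM solution, and for each such tuple call the black-box $k$-SUM algorithm on the union of the (at most $k \cdot n/g$) elements. The groups themselves are never all stored at once; instead, each group is extracted on demand via \textsc{NextGroup}, which by Lemma~\ref{lem:nextgroup} uses $O(n)$ time and $O(n/g)$ space per call.

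Concretely, I would write $k-1$ nested \textbf{for} loops, each iterating $g$ times, that extract groups $G_1', G_2', \ldots, G_{k-1}'$ (using the previous group's maximum as the value argument to \textsc{NextGroup}, as in the 3-SUM algorithm). Inside the innermost for loop, a \textbf{while} loop advances a $k$-th group $G_k'$ through $L$, starting with the group retrieved by $\textsc{NextGroup}(L,\,-\sum_{j<k}\max G_j',\,n/g+1)$ and continuing while $\min G_k' \le -\sum_{j<k} \min G_j'$; each iteration calls the $\rTISP(T(n),S(n))$ algorithm on $G_1' \cup \cdots \cup G_k'$, then advances $G_k'$ with \textsc{NextGroup}. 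Correctness follows exactly as in Theorem~\ref{thm:3sum-real}: any valid $k$-tuple has its elements falling into some group combination, and the outer loops together with the while-loop's range cover every $k$-tuple of groups whose smallest $k$-sum is $\le 0 \le$ largest $k$-sum.

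For the complexity bound, the outer loops alone already perform $g^{k-1}$ iterations, each contributing one \textsc{NextGroup} call. The nontrivial part is bounding the total number of innermost while-iterations summed over all $g^{k-1}$ outer tuples. Corollary~\ref{cor:ksum-decomp} says that there are only $O(k\, g^{k-1})$ subproblems $(G_{i_1}, \ldots, G_{i_k})$ whose $k$-sum range straddles zero, and every while-iteration of the algorithm either hits one of these nontrivial subproblems or sits at the boundary of the while-loop's entry/exit (at most $O(1)$ per outer tuple, by the same ``misalignment'' argument used for $C'$ in the 3-SUM proof). Thus the total number of \textsc{NextGroup} calls and recursive $k$-SUM calls is $O(g^{k-1})$, giving total time $O(g^{k-1}(n + T(n/g)))$ and simultaneous space $O(n/g + S(n/g))$.

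The main obstacle is justifying the $O(g^{k-1})$ bound on the total while-loop iterations: one needs to argue that the $k$-tuples of groups visited during nontrivial iterations really do form an antichain in $[g]^k$ under $\prec$, so that the Domination Lemma (via Corollary~\ref{cor:ksum-decomp}) applies. This requires handling the fact that \textsc{NextGroup} is called with a target value that need not align with a group boundary of $L$, so the ``groups'' of $L$ seen by the algorithm are slightly shifted versions of the canonical partition; the fix is to charge the $O(1)$ extra boundary group per outer tuple separately (as in the 3-SUM proof) and apply the Domination Lemma only to the aligned nontrivial subproblems.
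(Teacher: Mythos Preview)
Your proposal is correct and follows essentially the same approach as the paper: brute force the first $k-1$ groups with nested loops, sweep the $k$-th group via \textsc{NextGroup}, and bound the total number of subproblems by $O(g^{k-1})$ using Corollary~\ref{cor:ksum-decomp} plus the $O(1)$-per-outer-tuple misalignment charge. In fact you supply more detail than the paper, which simply says ``this follows from a generalized analysis of the proof of Theorem~\ref{thm:3sum-real}'' and invokes Corollary~\ref{cor:ksum-decomp} directly.
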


\begin{proof}
	This follows from a generalized analysis of the proof of
	Theorem~\ref{thm:3sum-real}. We brute force over which groups the first $k-1$
	elements are in. We then extract groups where the negative sum of elements
	from these first $k-1$ groups could land. By Corollary~\ref{cor:ksum-decomp}
	and similar reasoning as before, there are only $O(g^{k-1})$ tuples of blocks.
	For each tuple, we make a call to $\textsc{NextGroup}$ and to our input
	$k$-SUM algorithm on a subproblem of size $O(n/g)$. This gives the desired
	time and space, completing the proof.
\end{proof}

\subsection{Applying our $k$-SUM Self-Reduction}
We want to apply the self-reduction on efficient deterministic algorithms. One of the best starting points is the Schroeppel-Shamir $4$-SUM algorithm, which we note is actually deterministic and works on reals because it simply uses priority queues and reduces to the classic $2$-SUM algorithm, both of which only use comparisons.
\begin{lemma}[From \cite{schroeppel1981t}]
	\label{lem:ksum-linearspace}
	Real $4$-SUM is solvable in 
	$\rTISP(n^2, n)$.
\end{lemma}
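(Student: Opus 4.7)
The plan is to recall the Schroeppel--Shamir construction and verify that it runs purely with comparisons and additions, so that it works on the real-RAM. First I would reduce $4$-SUM on a single list $L$ of size $n$ to the question: given lists $A,B,C,D$ of size $O(n)$, are there $a\in A, b\in B, c\in C, d\in D$ with $a+b+c+d = 0$? This is done by trying each of the $O(1)$ ways to partition the four index-positions of a putative solution among four copies of $L$ (using partitions and sorted-index bookkeeping to enforce distinctness); each resulting sub-call only multiplies time and space by constants. Then view the problem as a $2$-SUM instance on the two virtual multisets $X = \{a+b : a\in A, b\in B\}$ and $Y = \{-(c+d) : c\in C, d\in D\}$ of size $\Theta(n^2)$, asking whether $X \cap Y \neq \emptyset$.

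The core step is to enumerate $X$ in increasing order and $Y$ in decreasing order on-the-fly in $O(n)$ space. Sort $B$ ascending; initialize a min-heap with the $|A|$ triples $(a + B[1], a, 1)$ for $a\in A$. An extract-min returns the next element of $X$ in sorted order; after extracting $(a+B[j], a, j)$, push $(a+B[j+1], a, j+1)$ if $j < |B|$. An easy induction shows the heap always carries exactly one ``frontier'' entry per $a\in A$, so the stream of extract-min outputs is $X$ in sorted order. Symmetrically, sort $D$ descending and maintain a max-heap that streams $Y$ in decreasing order. Both heaps have size at most $|A|,|C| = O(n)$ at all times, and their contents use only additions of input reals and comparisons.

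With the two streams in hand, run the standard two-pointer $2$-SUM algorithm over them: peek at the current top $\alpha$ of the min-heap and the current top $\beta$ of the max-heap, return YES if $\alpha+\beta = 0$, advance the min-heap if $\alpha+\beta<0$, otherwise advance the max-heap. Correctness is the usual invariant for two-pointer $2$-SUM applied to the sorted streams. Because each advance either outputs a new element of $X$ or removes an element of $Y$ from further consideration, the total number of heap operations is $O(n^2)$, and the working memory consists of the sorted copies of $B$ and $D$ plus the two heaps, all of total size $O(n)$.

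The main (minor) subtlety is matching the stated $\rTISP(n^2, n)$ bound rather than the naive $\rTISP(n^2 \log n, n)$, since each heap operation costs $O(\log n)$ comparisons on a heap of size $O(n)$. This is the place I would be most careful: either one invokes the refined amortized analysis in Schroeppel--Shamir that interleaves the two frontiers to eliminate the logarithmic factor, or one absorbs the $\log n$ into the paper's $O$-notation here (consistent with the introduction's ``$\tilde O(n^2)$ time, $\tilde O(n)$ space'' phrasing of the same algorithm). Either way, none of the downstream applications in Section~\ref{sec:ksum} are affected.
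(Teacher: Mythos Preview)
Your proposal is correct and matches the paper's treatment: the paper does not give its own proof of this lemma but simply cites Schroeppel--Shamir and remarks (in the sentence preceding the lemma) that the algorithm ``is actually deterministic and works on reals because it simply uses priority queues and reduces to the classic $2$-SUM algorithm, both of which only use comparisons.'' Your sketch spells out exactly that construction in more detail than the paper does.

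Your observation about the $\log n$ factor is well taken and is indeed a minor inconsistency in the paper itself: the introduction describes the Schroeppel--Shamir result as $\tilde O(n^2)$ time and $\tilde O(n)$ space, while the lemma states $\rTISP(n^2,n)$ with no tilde. As you note, the downstream applications in Section~\ref{sec:ksum} are insensitive to polylogarithmic factors, so this does not affect any of the subsequent results.
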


Another useful fact observed by Wang is that an algorithm for $k$-SUM can be transformed into an algorithm for $(k+1)$-SUM by brute-forcing one element:
\begin{lemma}[From \cite{wang2014space}]
	If Real $k$-SUM is solvable in 
	$\rTISP(T(n), S(n))$  then 
	real $k+1$-SUM is solvable in  $\rTISP(nT(n), S(n)+1)$.
	\label{lem:bootstrap}
\end{lemma}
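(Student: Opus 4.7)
The idea is to reduce $(k+1)$-SUM on a list $L$ of size $n$ to exactly $n$ separate $k$-SUM calls, one for each choice of a ``distinguished'' element $L[i]$ that we guess to be part of the solution. For each such $i$, the remaining task is to decide whether there exist $k$ elements of $L \setminus \{L[i]\}$ summing to $-L[i]$. The one technical point is that the given black box for $k$-SUM only handles target $0$, so I have to convert the shifted-target question to a genuine target-$0$ question without recursing or paying extra time.

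The trick I would use is to define a virtual list $L^{(i)}$ of size $n-1$ by the rule $L^{(i)}[j] = k \cdot L[j] + L[i]$ for every $j \neq i$. A direct calculation gives $\sum_{r=1}^{k} L^{(i)}[j_r] = k \sum_{r=1}^{k} L[j_r] + k \cdot L[i]$, which vanishes if and only if $\sum_{r=1}^{k} L[j_r] = -L[i]$. Hence a $k$-SUM solution in $L^{(i)}$ is the same as a $(k+1)$-SUM solution in $L$ that contains $L[i]$, and the map $x \mapsto k x + L[i]$ being affine and injective means there are no spurious collisions. Because $k$ is a fixed constant, multiplying by $k$ is simulated by $k-1$ additions, so the construction is legitimate in the comparison-addition model.

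Algorithmically, I would loop $i$ from $1$ to $n$ and invoke the given $\rTISP(T(n), S(n))$ algorithm on $L^{(i)}$. The virtual list is never materialized: whenever the inner algorithm requests $L^{(i)}[j]$, we read $L[j]$ from the read-only input and return $k \cdot L[j] + L[i]$ on the fly using $O(1)$ additions. The only additional auxiliary storage is the single value $L[i]$ (or equivalently the loop index), yielding $S(n)+1$ space in total, and the running time is $n$ times the cost of one $k$-SUM call, namely $O(n\cdot T(n))$. If any call returns \textsf{true}, return \textsf{true}; otherwise return \textsf{false}.

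The main obstacle, modest as it is, is verifying that the virtualization is faithful with respect to the computational model: I need to check that the affine rewriting preserves both the time bound (constant overhead per element access, since $k$ is a constant) and the space bound (no extra table is stored, only the scalar $L[i]$). Once this is in place, correctness is immediate in both directions, since the guessed element $L[i]$ ranges over all of $L$ and every $(k+1)$-SUM solution contains at least one element that can be so distinguished.
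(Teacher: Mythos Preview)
The paper does not actually supply a proof of this lemma; it is simply quoted from Wang~\cite{wang2014space} and used as a black box. So there is no in-paper argument to compare against.

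Your proof is correct and is exactly the intended ``brute-force one coordinate'' argument. The one place where you add genuine content beyond what is usually written is the affine rewriting $x \mapsto kx + L[i]$ to turn the shifted-target $k$-SUM back into a target-$0$ instance without dividing by $k$; this is the right fix in the comparison-addition model and your check that it costs only $O(k)=O(1)$ additions per access is sound. The virtualization of $L^{(i)}$ (intercepting read-only accesses and recomputing on the fly) is the standard way to feed a derived list to a $\rTISP$ oracle without materializing it, and your accounting of $S(n)+1$ space and $O(nT(n))$ time is accurate. No gaps.
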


Suppose we want to use our results to derive a linear-space algorithm for $k$-SUM. We will assume $k$ is a multiple of $4$, although Lemma~\ref{lem:bootstrap} allows us to fill in for the other values of $k$. By writing down sums of $k/4$ elements, we can transform $k$-SUM to $4$-SUM, yielding a $\rTISP(n^{k/2}, n^{k/4})$ algorithm. We can then apply Theorem~\ref{thm:real-self-reduction} with $g = n^{(k - 4)/k}$ to get a $\rTISP(n^{k - 3 + 4/k}, n )$ algorithm. Notice that this algorithm runs significantly faster than $O(n^k)$ time; we get $O(n^{11/2})$ for $8$-SUM and $O(n^{28/3})$ for $12$-SUM. As a coarse upper bound, we can apply Lemma~\ref{lem:bootstrap} and round down our savings (to make things cleaner), compensating for $k$ which are not a multiple of $4$, we get:

\begin{corollary}
	For $k \ge 4$, $k$-SUM  is solvable in
	$\rTISP(n^{k - 3 + 4/(k-3)}, n )$.
\end{corollary}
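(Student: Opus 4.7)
The plan is to prove the bound in two stages: first establish a sharper bound $\rTISP(n^{k-3+4/k}, n)$ when $k$ is a multiple of $4$, then bootstrap with Lemma~\ref{lem:bootstrap} to cover all other $k$, at the cost of replacing $4/k$ by the weaker $4/(k-3)$ stated in the corollary.

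For the first stage, assume $k = 4m$. I would use the classical meet-in-the-middle trick: view the $k$-SUM instance on $n$ numbers as a $4$-SUM instance on four lists, each consisting of the $N = n^{k/4}$ possible sums of $k/4$-element tuples from the original input. Applying Schroeppel--Shamir (Lemma~\ref{lem:ksum-linearspace}) on this $4$-SUM instance of size $N$ gives a $k$-SUM algorithm in $\rTISP(N^2, N) = \rTISP(n^{k/2}, n^{k/4})$. I then feed this algorithm into the $k$-SUM self-reduction (Theorem~\ref{thm:real-self-reduction}) with parameter $g = n^{(k-4)/k}$, so that $n/g = n^{4/k}$. A short computation gives $T(n/g) = (n^{4/k})^{k/2} = n^2$ and $S(n/g) = (n^{4/k})^{k/4} = n$, whence the total time is
\[
g^{k-1}\bigl(n + T(n/g)\bigr) \;=\; n^{(k-4)(k-1)/k}\cdot n^{2} \;=\; n^{k-3+4/k},
\]
and the total space is $n/g + S(n/g) = n^{4/k} + n = O(n)$. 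This establishes $k$-SUM in $\rTISP(n^{k-3+4/k}, n)$ whenever $4 \mid k$.

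For the second stage, given an arbitrary $k \ge 4$, let $k_0 \in \{k, k-1, k-2, k-3\}$ be the largest multiple of $4$ not exceeding $k$, so $k_0 \ge k-3$. The first stage produces a $k_0$-SUM algorithm in $\rTISP(n^{k_0 - 3 + 4/k_0}, n)$. Applying Lemma~\ref{lem:bootstrap} a total of $r := k - k_0 \le 3$ times (each application multiplies time by $n$ and adds one to the space) yields a $k$-SUM algorithm running in time $n^{r}\cdot n^{k_0 - 3 + 4/k_0} = n^{k - 3 + 4/k_0}$ and space $n + r = O(n)$. Since $k_0 \ge k-3$ gives $4/k_0 \le 4/(k-3)$, the time bound is at most $n^{k - 3 + 4/(k-3)}$, which is the claimed corollary.

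The work here is essentially all arithmetic bookkeeping; the one substantive design choice is picking $g$ in the self-reduction so that the base-algorithm time $T(n/g)$ is driven down to exactly $n^2$ (matching the brute-force term inside the reduction) while simultaneously collapsing the Schroeppel--Shamir space $S(n/g) = (n/g)^{k/4}$ down to $O(n)$. Once $g = n^{(k-4)/k}$ is identified as the point where both happen at once, both the time exponent $k - 3 + 4/k$ and the $O(n)$ space bound fall out. The $4/(k-3)$ in the final statement is purely a uniformizing slack introduced so that a single closed-form exponent covers all $k$ not divisible by $4$.
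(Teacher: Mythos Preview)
Your proposal is correct and follows essentially the same approach as the paper: reduce $k$-SUM (for $k$ a multiple of $4$) to $4$-SUM on $n^{k/4}$ sums, apply Schroeppel--Shamir to get $\rTISP(n^{k/2}, n^{k/4})$, then plug this into the self-reduction with $g = n^{(k-4)/k}$ to obtain $\rTISP(n^{k-3+4/k}, n)$, and finally use Lemma~\ref{lem:bootstrap} up to three times and weaken $4/k_0$ to $4/(k-3)$ for a uniform statement. Your arithmetic and the bootstrapping argument via the nearest multiple of $4$ below $k$ match the paper's reasoning exactly.
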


Suppose we wanted to use $O(\sqrt{n})$ space instead. We get smaller subproblems by making more groups; choosing $g = n^{(k-2)/k}$ instead yields a $\rTISP(n^{k - 2 + 2/k}, \sqrt{n} )$. Similarly applying Lemma~\ref{lem:bootstrap} and round down our savings to compensate for $k$ which are not a multiple of $4$, we get another coarse upper bound:

\begin{corollary}
	For $k \ge 4$, $k$-SUM  is solvable in
	$\rTISP(n^{k - 2 + 2/k}, \sqrt{n} )$.
\end{corollary}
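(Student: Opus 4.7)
The plan is to instantiate Theorem~\ref{thm:real-self-reduction} once with a carefully chosen base $k$-SUM algorithm derived from Schroeppel-Shamir, and to pick the parameter $g$ so that the subproblem space bound is exactly $\sqrt{n}$.

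First, focus on the case when $k$ is a multiple of $4$. By writing down sums of $k/4$-tuples implicitly (as in the text preceding Lemma~\ref{lem:ksum-linearspace}), I convert a $k$-SUM instance on $n$ numbers into a $4$-SUM instance on $N=n^{k/4}$ numbers. Applying Lemma~\ref{lem:ksum-linearspace} to this $4$-SUM instance gives a base $k$-SUM algorithm with time $T(n)=n^{k/2}$ and space $S(n)=n^{k/4}$.

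Next, I feed this base algorithm into Theorem~\ref{thm:real-self-reduction} with $g:=n^{(k-2)/k}$, so that the subproblem size is $n/g=n^{2/k}$. Two quick computations show that the subproblem cost becomes cheap: $T(n/g)=(n^{2/k})^{k/2}=n$ and $S(n/g)=(n^{2/k})^{k/4}=\sqrt{n}$. The theorem then yields total time $g^{k-1}(n+T(n/g))=O(n^{(k-1)(k-2)/k}\cdot n)=O(n^{(k-1)(k-2)/k+1})$ and total space $n/g+S(n/g)=n^{2/k}+\sqrt{n}$. The only nontrivial step is the arithmetic simplification
\[
\frac{(k-1)(k-2)}{k}+1 \;=\; \frac{k^2-3k+2+k}{k} \;=\; \frac{k^2-2k+2}{k} \;=\; k-2+\frac{2}{k},
\]
which gives the claimed time $O(n^{k-2+2/k})$. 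For the space bound, observe that $k\ge 4$ implies $2/k\le 1/2$, so $n^{2/k}\le\sqrt{n}$, and the overall space is $O(\sqrt{n})$.

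For $k\ge 4$ that is not a multiple of $4$, I apply Lemma~\ref{lem:bootstrap} to the algorithm for $4\lfloor k/4\rfloor$-SUM obtained above; each bootstrap multiplies the time by $n$ and contributes only $O(1)$ to the space. As the paragraph preceding the corollary states, this is a ``coarse upper bound'' where we round down the small savings in the exponent, absorbing the resulting extra factors into $n^{k-2+2/k}$.

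The proof has no real obstacle; it is essentially a plug-and-chug of parameters into Theorem~\ref{thm:real-self-reduction}, and the only care needed is in verifying the algebraic identity above and checking that $n^{2/k}$ is dominated by $\sqrt{n}$ precisely in the regime $k\ge 4$.
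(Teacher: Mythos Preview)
Your proof is exactly the paper's argument: for $k$ divisible by $4$, reduce $k$-SUM to $4$-SUM via Schroeppel--Shamir to obtain the base $\rTISP(n^{k/2},n^{k/4})$ algorithm, apply Theorem~\ref{thm:real-self-reduction} with $g=n^{(k-2)/k}$, and then handle the remaining $k$ by bootstrapping with Lemma~\ref{lem:bootstrap}. One small caveat (shared by the paper's own paragraph): bootstrapping from $k'=4\lfloor k/4\rfloor$ yields time $n^{k-2+2/k'}$, and since $k'<k$ this slightly \emph{exceeds} $n^{k-2+2/k}$ rather than being absorbed by it; the uniform bound that the rounding actually justifies is $n^{k-2+2/(k-3)}$, which is the version stated in the paper's introduction.
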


\section{Future Work}

We would like to extend these results to derandomize other known randomized algorithms for $k$-SUM. To do that, it seems we require a ``deterministic simulation'' of the hash functions used in those results. Baran, Demaine, and Patrascu use hashing to get subquadratic algorithms for $3$-SUM \cite{baran2005subquadratic}; Patrascu uses it to reduce $3$-SUM to Convolution $3$-SUM \cite{patrascu2010towards}; Wang uses it to produce a family of linear-space algorithms for $k$-SUM \cite{wang2014space}. Which of these results, if any, can be derandomized? 

The hash families involved have three crucial properties: load-balancing (the hash buckets are not ``too large''), few subproblems (the number of $k$-tuples of hash buckets examined is ``small''), and few false positives (there are few non-$k$-SUM solutions mapped to $k$-tuples of hash buckets examined). Our \kselect{} algorithm (Lemma~\ref{lem:nextgroup}) and Domination Lemma (Lemma~\ref{lem:domination}) are used to achieve the first two properties, without using randomization. Can the last property also be simulated deterministically? (Note that it's not entirely clear what it would mean to simulate ``few false positives'' deterministically.) If so, it is likely that all these results can be derandomized efficiently.

\bibliographystyle{abbrv}
\bibliography{mybib} 

\appendix

\section{$s$-Select}
\label{app:kselect}
In addition to  $\textsc{NextGroup}$ we have two other \kselect{} algorithms. We present two algorithms to solve this subtask. The first requires the values to be integers in the range $[-R, R]$ and runs in $\wTISP(n \log R, 1)$ (recall we are in the word-RAM model and we are measuring space in terms of the number of words). The other needs no assumptions and returns the answers for $g$ choices of $k$ in $\rTISP(n^2, g)$. The \nextgroup algorithm discussed in subsection \ref{subsec:nextgroup} runs in $\rTISP(n, s)$.

\subsection{Bounded Range $s$-Select}

This first algorithm runs a binary search over the bounded range to locate the $s^{th}$ smallest element.

\begin{algorithm}[H]
	\label{alg:bounded-range-kselect}
	Set $\ell = -R$, $r = R$\;
	\While{$\ell < r$} {
		Set $m = \lfloor \frac{\ell + r}{2} \rfloor$\;
		Set $c$ = 0\;
		\For{$a \in L$} {
			\If{$m \ge a$} {
				Increment $c$\;
			}
		}
		\eIf{$c \ge s$} {
			Set $r = m$\;
		} {
		Set $\ell = m+1$\;
	}
}
\Return $\ell$\;
\caption{Bounded Range \kselect{} Algorithm}
\end{algorithm}

\begin{theorem}
	\label{thm:kselect-range}
	Algorithm~\ref{alg:bounded-range-kselect} solves \kselect{} in $\wTISP(n \log R, 1)$.
\end{theorem}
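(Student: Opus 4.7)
The plan is to verify correctness by maintaining a binary search invariant, and then read off the time and space bounds directly from the structure of the algorithm.

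First, I would establish the loop invariant that at the start of every iteration the $s^{th}$ smallest element of $L$ lies in the integer interval $[\ell, r]$. This holds initially because all elements of $L$ lie in $[-R, R]$ by assumption. In each iteration, the algorithm computes $c$, the number of elements of $L$ that are at most $m = \lfloor (\ell+r)/2 \rfloor$. If $c \ge s$, then the $s^{th}$ smallest element is at most $m$, so updating $r \leftarrow m$ preserves the invariant; if $c < s$, then strictly fewer than $s$ elements are at most $m$, so the $s^{th}$ smallest element is strictly greater than $m$, and updating $\ell \leftarrow m+1$ preserves the invariant. Because $\ell \le m \le r-1$ whenever $\ell < r$, each update strictly decreases $r - \ell$, so the loop terminates. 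When the loop exits we have $\ell = r$, and the invariant pins down the $s^{th}$ smallest element as $\ell$.

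For the complexity, I would argue that each iteration at least halves the integer interval $[\ell, r]$, which has initial length $2R$, so the number of iterations is $O(\log R)$. Each iteration performs a single linear scan over $L$ doing $O(1)$ work per element (one comparison and possibly one increment), for a total time of $O(n \log R)$. Since the input sits in read-only memory, the only auxiliary storage used is the constant collection of word-sized variables $\ell, r, m, c$, plus the loop index for the inner \textbf{for} loop; no storage grows with $n$ or $s$. This gives $O(1)$ words of space and establishes the $\wTISP(n\log R, 1)$ bound.

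The proof is essentially routine; the only subtle point to double-check is that $m$ is strictly less than $r$ whenever $\ell < r$, so that $r \leftarrow m$ in the $c \ge s$ branch actually makes progress (this follows from $m = \lfloor (\ell+r)/2 \rfloor \le r - 1$), and symmetrically $m+1 > \ell$ in the other branch. Once these boundary conditions are verified, no further obstacle remains.
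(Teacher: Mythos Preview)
Your proposal is correct and follows essentially the same approach as the paper: analyze the binary search to show it terminates at the $s^{th}$ smallest element, then count $O(\log R)$ iterations each doing an $O(n)$ scan with $O(1)$ auxiliary storage. Your invariant-based argument is simply a more carefully spelled-out version of the paper's one-line claim that the algorithm ``returns the smallest integer $v$ such that there are $s$ values less than or equal to $v$,'' and your explicit check that $m \le r-1$ when $\ell < r$ fills in the termination detail the paper leaves implicit.
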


\begin{proof}
	Algorithm~\ref{alg:bounded-range-kselect} returns the smallest integer $v$ such that there are $s$ values less than or equal to $v$. Since all values are integers, by assumption, this is the $s^{th}$ smallest value. The algorithm runs for $O(\log R)$ iterations, but each iteration does a scan of $L$ that takes $O(n)$ time. The algorithm keeps a constant number of values, so it uses $O(1)$ space.
\end{proof}

\subsection{Batch real $k$-Select}

When we lose the range and integrality assumptions, we can still gain when we have several \kselect{} instances with the same list $L$. In particular, suppose there are $g$ indices we want to know: $s_1, \ldots, s_g$, where $g \le n$, we can go through the list in order in $n^2$ time noting and saving the value of all of those indices. Furthermore, we can use this method over the reals.

\begin{algorithm}[H]
	\label{alg:batch-kselect}
	Set $prev = -\infty$\;
	Create a return vector $V$ of length $g$\;
	Set $i = 1$\;
	\While{$i \le n$} {
		Set $curr = \infty$\;
		\For{$a \in L$} {
			If $a > prev$, set $curr = \min (curr, a)$\;
		}
		Set $dup = 0$\;
		\For{$a \in L$} {
			If $a = curr$, increment $dup$\;
		}
		\For{$j = [1, g]$} {
			If $k_j \in [i, i + dup)$, set $V[j] = curr$\;
		}
		Set $prev = curr$\;
		Increment $i$ by $dup$\;
	}
	\Return $V$\;
	\caption{Batch \kselect{} Algorithm}
\end{algorithm}

\begin{theorem}
	Algorithm~\ref{alg:batch-kselect} solves batch real \kselect{} in $\rTISP(n^2, g)$.
\end{theorem}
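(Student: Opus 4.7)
The plan is to establish correctness by showing that the algorithm implicitly traverses $L$ in sorted order (with duplicates grouped), and then to analyze time and space separately. The key invariant, maintained at the top of each outer \texttt{while} iteration, is that $prev$ holds the largest value in $L$ whose rank is strictly less than $i$, and that $V[j]$ has been set correctly for every query index $s_j < i$. Equivalently, the first $i-1$ ranks of $L$ (in sorted order) have been processed.

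To verify the invariant: given $prev$, the first inner \texttt{for} loop scans $L$ and sets $curr$ to be $\min\{a \in L : a > prev\}$, which is precisely the value at rank $i$ in sorted order (assuming distinct values above $prev$; duplicates of $prev$ have already been consumed by previous iterations). The second scan computes $dup$, the multiplicity of $curr$ in $L$, so the values at ranks $i, i+1, \ldots, i+dup-1$ are all equal to $curr$. The innermost loop then writes $curr$ to $V[j]$ whenever $s_j$ falls in the half-open interval $[i, i+dup)$, which is exactly when the $s_j$-th smallest element equals $curr$. Incrementing $i$ by $dup$ and setting $prev := curr$ restores the invariant. Correctness follows because the loop terminates once $i > n$, at which point every query index $s_j \in [1,n]$ has been handled.

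For the time bound, each outer \texttt{while} iteration performs two linear scans of $L$ (cost $O(n)$) plus an $O(g)$ sweep over the queries; since $g \le n$, one iteration costs $O(n)$. The outer loop runs at most once per distinct value of $L$, hence at most $n$ times, giving $O(n^2)$ total. For the space bound, the algorithm only stores $prev, curr, i, dup, j$ (constantly many real registers) together with the output vector $V$ of length $g$; the input $L$ is in read-only memory and does not count. This yields $O(g)$ working space, as required.

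The only subtlety worth flagging is the handling of duplicates: the argument relies on $prev$ being strictly less than $curr$ after each iteration so that the same block of equal values is not revisited, and on starting with $prev = -\infty$ so that the smallest element of $L$ is found on the first iteration. Once this is made precise, nothing else in the analysis is delicate, so the proof can be kept short.
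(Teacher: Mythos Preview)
Your proof is correct and follows the same approach as the paper's: establish that the outer loop walks through $L$ in sorted order (handling duplicates via the multiplicity count), then bound the number of iterations by $n$ and the per-iteration cost by $O(n)$ (using $g \le n$), with $O(g)$ space for the output vector. Your version is simply more detailed, spelling out the loop invariant and the duplicate-handling subtlety that the paper glosses over in a single sentence.
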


\begin{proof}
	Algorithm~\ref{alg:batch-kselect} repeatedly scans $L$, each time finding the next largest element. After it finds the $s^{th}$ smallest element, it checks to see if $s$ was one of the requested indices, and if so, fills it into its answer. The algorithm performs $O(n)$ scans of $L$ and the $k_j$, but since $g \le n$, this runs in $O(n^2)$ time. Keeping $g$ elements around takes $O(g)$ space.
\end{proof}

\end{document}